\newif\ifbuildtikz
\newcommand{\emptycmd}[1]{}
\newcommand{\floor}{\delim\lfloor\rfloor}
\newcommand{\etal}{et al.\xspace}
\newcommand{\sig}[1]{\ensuremath{\sigma_{#1}}}
\newcommand{\cc}[1]{\textnormal{#1}\xspace}
\newcommand{\NP}{\cc{NP}}
\newcommand{\eps}{\varepsilon}
\newcommand{\N}{\ensuremath{\mathbb{N}}}
\newcommand{\R}{\ensuremath{\mathbb{R}}}
\newcommand{\polar}{\delimpair({[.];})}
\newcommand{\textfrac}[2]{\IfSubStr{#1}{+}{\brk*{#1}}{\IfSubStr{#1}{-}{\brk*{#1}}{#1}}/#2}
\setlist[enumerate,1]{label={(\roman*)}}
\newlist{theoremcases}{enumerate}{1}
\setlist[theoremcases,1]{label={(\roman*)}}
\newlist{proofcases}{enumerate}{3}
\setlist[proofcases,1]{label={(\roman*)}}
\setlist[proofcases,2]{label={(\arabic*)}}
\setlist[proofcases,3]{label={(\alph*)}}
\definecolor{fu-blue}{RGB}{0, 51, 102}
\definecolor{fu-green}{RGB}{178, 204, 51}
\definecolor{fu-red}{RGB}{204, 0, 0}
\definecolor{fu-orange}{RGB}{255, 153, 0}
\definecolor{fu-blue-90}{RGB}{22, 69, 116}
\definecolor{fu-blue-80}{RGB}{44, 86, 130}
\definecolor{fu-blue-70}{RGB}{66, 104, 144}
\definecolor{fu-blue-60}{RGB}{88, 122, 158}
\definecolor{fu-blue-50}{RGB}{109, 139, 172}
\definecolor{fu-blue-40}{RGB}{131, 157, 186}
\definecolor{fu-blue-30}{RGB}{153, 175, 200}
\definecolor{fu-blue-20}{RGB}{175, 192, 214}
\definecolor{fu-blue-10}{RGB}{197, 210, 228}
\definecolor{fu-black}{gray}{0}
\definecolor{fu-gray-90}{gray}{.1}
\definecolor{fu-gray-80}{gray}{.2}
\definecolor{fu-gray-70}{gray}{.3}
\definecolor{fu-gray-60}{gray}{.4}
\definecolor{fu-gray-50}{gray}{.5}
\definecolor{fu-gray-40}{gray}{.6}
\definecolor{fu-gray-30}{gray}{.7}
\definecolor{fu-gray-20}{gray}{.8}
\definecolor{fu-gray-10}{gray}{.9}
\definecolor{fu-gray-5}{gray}{.95}
\tiny\color{fu-gray-50},
\newcommand{%
    \ifbuildtikz%
    \input{tikz/.tikz}%
    \else%
    \includegraphics{tikz/.pdf}%
    \fi%
}[1]{%
    \ifbuildtikz%
    \input{tikz/#1.tikz}%
    \else%
    \includegraphics{tikz/#1.pdf}%
    \fi%
}
\tikzset{external/only named}
\tikzset{external/system call={%
    lualatex \tikzexternalcheckshellescape -halt-on-error -interaction=nonstopmode -jobname "\image" "\texsource"%
}}
\let\tikzexternalizenext\tikzsetnextfilename
\let\tikzmaybeexternalizenext\tikzsetnextfilename
\let\tikzexternalizenext\emptycmd
\let\tikzmaybeexternalizenext\emptycmd
\patchcmd{\SOUL@ulunderline}{\dimen@}{\SOUL@dimen}{}{}
\patchcmd{\SOUL@ulunderline}{\dimen@}{\SOUL@dimen}{}{}
\patchcmd{\SOUL@ulunderline}{\dimen@}{\SOUL@dimen}{}{}
\newdimen\SOUL@dimen
\xpretocmd{\todo}{\@bsphack}{}{}
\xapptocmd{\todo}{\@esphack}{}{}
\tikzset{%
    notestyle/.append style={%
        rounded corners=0pt,%
    }%
}
\LetLtxMacro{\oldmissingfigure}{\missingfigure}
\renewcommand{\missingfigure}[2][]{\tikzexternaldisable\oldmissingfigure[{#1}]{#2}\tikzexternalenable}
\LetLtxMacro{\oldtodo}{\todo}
\renewcommand{\todo}[2][]{\tikzexternaldisable\oldtodo[#1]{#2}\tikzexternalenable}
\tikzset{%
    every path/.append style={line join=round},%
}
\tikzset{clip/.code={%
    \let\tikz@mode=\pgfutil@empty%
    \let\tikz@preactions=\pgfutil@empty%
    \let\tikz@postactions=\pgfutil@empty%
    \let\tikz@options=\pgfutil@empty%
    \tikz@addmode{\tikz@mode@cliptrue}%
  },
}
\tikzstyle{vertex}=[%
\tikzstyle{beacon}=[%
\tikzstyle{optional vertex}=[%
\tikzstyle{polygon}=[%
\tikzstyle{filled polygon}=[%
\tikzstyle{optional polygon}=[%
\tikzstyle{polygon triangulation}=[%
\tikzstyle{filled polygon triangulation}=[%
\tikzstyle{beacon boundary}=[%
\tikzstyle{polytope}=[%
\tikzstyle{polytope hidden}=[%
\tikzstyle{polytope triangulation}=[%
\tikzstyle{shared vertex}=[%
\tikzstyle{shared edge}=[%
\tikzstyle{attraction path}=[%
\tikzstyle{virtual attraction path}=[%
\tikzstyle{primary coordinate axis}=[%
\tikzstyle{secondary coordinate axis}=[%
\tikzstyle{removed tetrahedra}=[%
\tikzstyle{invertedclip}=[%
\newcommand{\tetrahedron}{%
    \tikz[x=1.5ex, y=1.5ex, baseline=-0.05ex, line width=0.4]{%
        \draw (0.45,1) -- (0,0) -- (.9,0) -- (0.45,1) -- (1.2,0.3) -- (.9,0);
        \path (0,0) -- (1.3,0);
        % \draw[densely dotted] (0,0) -- (1.2,0.3);
        }
    }
\newcommand{\trapezoid}{%
    \tikz[x=1.5ex, y=1.5ex, baseline=-0.05ex, line width=0.4]{%
        \draw (0,0) -- (1.1,0) -- (0.8,1) -- (0.3,1) -- cycle;
        \path (0,0) -- (1.3,0);
        }
    }
\newdefinition{definition}{Definition}
\newdefinition{observation}[definition]{Observation}
\newdefinition{note}[definition]{Note}
\newtheorem{theorem}[definition]{Theorem}
\newtheorem{lemma}[definition]{Lemma}
\ifdef{\qedhere}{}{
\newif\ifqed
\newcommand\qedhere{\global\qedfalse\hfill\ensuremath{\blacksquare}}
\appto\proof{\global\qedtrue}
\preto\endproof{\ifqed\hfill\ensuremath{\blacksquare}\fi}
}
\crefname{problem}{Problem}{Problems}
\Crefname{problem}{Problem}{Problems}
\crefname{assumption}{Assumption}{Assumptions}
\Crefname{assumption}{Assumption}{Assumptions}
\crefname{openquestion}{Open question}{Open questions}
\Crefname{openquestion}{Open question}{Open questions}
\crefname{observation}{Observation}{Observations}
\Crefname{observation}{Observation}{Observations}
\crefname{hypothesis}{Hypothesis}{Hypotheses}
\Crefname{hypothesis}{Hypothesis}{Hypotheses}
\crefname{theoremcasesi}{case}{cases}
\Crefname{theoremcasesi}{Case}{Cases}
\crefname{proofcasesi}{case}{cases}
\Crefname{proofcasesi}{Case}{Cases}
\crefname{proofcasesii}{subcase}{subcases}
\Crefname{proofcasesii}{Subcase}{Subcases}
\crefname{proofcasesiii}{subsubcase}{subsubcases}
\Crefname{proofcasesiii}{Subsubcase}{Subsubcases}
\crefname{subsfigure}{subfigure}{subfigures}
\Crefname{subsfigure}{Subfigure}{Subfigures}
\DeclareRobustCommand{\nosortcref}[1]{%
  \begingroup\@cref@sortfalse\cref{#1}\endgroup
}
\DeclareRobustCommand{\Nosortcref}[1]{%
  \begingroup\@cref@sortfalse\Cref{#1}\endgroup
}
\DeclareRobustCommand{\nocrefsort}{\@cref@sortfalse}
\newdimen\subfigwidth
\begin{document}

\begin{frontmatter}
    \makeatletter
    \ifdefined\X@thanks\else\xdef\X@thanks{0}\fi
    \ifdefined\X@previous\else\xdef\X@previous{0}\fi
    \makeatother
    \title{Combinatorics of Beacon-based Routing in Three
    Dimensions\tnoteref{thanks}\tnoteref{previous}}
    \tnotetext[thanks]{Supported in part by DFG grant MU 3501/1
    and ERC StG 757609.}
    \tnotetext[previous]{A preliminary version appeared as
     J.~Cleve and W.~Mulzer.
     \emph{Combinatorics of Beacon-based Routing in Three Dimensions}.
      Proc.~13th LATIN, pp.~346--360.}
    \tnotetext[license]{\textcopyright{} 2020. This manuscript version is made available under the CC-BY-NC-ND 4.0 license \texttt{http://creativecommons.org/licenses/by-nc-nd/4.0/}.}
    \author{Jonas Cleve}
    \ead{jonascleve@inf.fu-berlin.de}
    \author{Wolfgang Mulzer}
    \ead{mulzer@inf.fu-berlin.de}
    \address{Institut f\"ur Informatik, Freie Universit\"at
    Berlin, Berlin, Germany}
    % !TEX root = ../paper.tex
% !TeX spellcheck = en-US

\begin{abstract}
A beacon $b \in \R^d$ is a point\hyp{}shaped object in
$d$-dimensional space that can 
exert a magnetic pull on any other point\hyp{}shaped 
object $p \in \R^d$. This object $p$ then moves greedily 
towards $b$. The motion stops when $p$ 
gets stuck at an obstacle or when $p$ 
reaches $b$. By placing beacons 
inside a $d$-dimensional polyhedron $P$, we can implement a 
scheme to route point\hyp{}shaped objects between 
any two locations in $P$.  We can also place 
beacons to guard $P$, which means that any 
point\hyp{}shaped object in $P$ can reach at 
least one activated beacon.

The notion of beacon\hyp{}based routing and 
guarding was introduced in 2011 by Biro~\etal~[FWCG'11].
The two\hyp{}dimensional 
setting is discussed in great detail in Biro's 2013 PhD 
thesis~[SUNY-SB'13]. 

Here, we consider combinatorial aspects of beacon routing in 
three dimensions. We show that 
$\floor*{\textfrac{m+1}{3}}$ beacons are always 
sufficient and sometimes necessary to route between 
any two points in a given polyhedron $P$, where 
$m$ is the smallest size of a tetrahedral 
decomposition of $P$. This is one of the first results 
to show that beacon routing is also possible in 
higher dimensions.
\end{abstract}

\begin{keyword}
beacon routing \sep%
three dimensions \sep%
polytopes
\end{keyword}

\end{frontmatter}

    % !TEX root = ../paper.tex
% !TeX spellcheck = en-US

\section{Introduction}%
\label{sec:introduction}

Visibility in the presence of obstacles 
is a classic notion in combinatorial and 
computational geometry~\cite{Ghosh07}. 
Given a simple polygon $P$ in the plane, 
two points $p$ and $q$ in $P$ can 
\emph{see each other} if and only if the 
line segment between $p$ and $q$ lies in 
$P$ (considered as a closed region). The 
\emph{visibility region} of a point 
$p \in P$ consists of all points 
$q \in P$ such that $p$ and $q$ can see 
each other. These basic definitions and 
their variants have spawned an active 
subarea of computational geometry, with 
whole textbooks devoted to 
it~\cite{Ghosh07,ORourke87}.

In 2011, Biro~\etal~\cite{BiroGaIwKoMi11} introduced the 
concept of \emph{beacon\hyp{}based} 
visibility, where the objects take a more 
active role. A \emph{beacon} $b \in \R^d$ 
is a point\hyp{}shaped object in 
$d$-dimensional space. The beacon $b$ can 
be \emph{enabled} or \emph{disabled}. 
Once $b$ is enabled, it exerts a 
\emph{magnetic pull} on any other 
point\hyp{}shaped object $p$ in $\R^d$.
Then, the object $p$ moves in the 
direction that most rapidly decreases 
the distance between $b$ and $p$. In the 
simplest case, this motion proceeds along 
the line segment $pb$. If $p$ encounters an 
obstacle that blocks the direct path 
along $pb$, then $p$ slides along the 
boundary of the obstacle in the direction 
that most rapidly decreases the distance 
to $b$.  If this is not possible, the 
motion ends, and we say that $p$ gets 
\emph{stuck}. If $p$ does not get stuck, 
then it reaches $b$, and we say that $p$ 
is \emph{attracted} by $b$.
See \cref{fig:intro:beacon-attraction} for examples.
The \emph{attraction region} of $b$ consists
of all points that are attracted 
by $b$. This is an extension of 
classic visibility: the visibility region 
of $b$ is a subset of the attraction 
region of $b$. However, unlike classic
visibility, beacon attraction is not 
symmetric. Thus, it makes also sense to consider
the \emph{inverse} attraction region of
a point $p$, i.e., the set of all beacon
positions $b$ such that $b$ attracts $p$.
Two examples of these regions can be found in
\cref{fig:intro:attraction-regions}.

\begin{figure}[tp]
    \centering
    \tikzmaybeexternalizenext{intro--beacon-attraction}
    \ifbuildtikz%
    \input{tikz/intro--beacon-attraction.tikz}%
    \else%
    \includegraphics{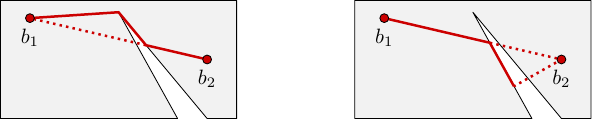}%
    \fi%

    \caption{%
    Attraction is not symmetric. In this two-dimensional example
    $b_1$ attracts $b_2$ (left) but $b_2$ does not attract $b_1$
    (right).}%
    \label{fig:intro:beacon-attraction}
\end{figure}

\begin{figure}[bp]
    \centering
    \tikzmaybeexternalizenext{intro--attraction-regions}
    \ifbuildtikz%
    \input{tikz/intro--attraction-regions.tikz}%
    \else%
    \includegraphics{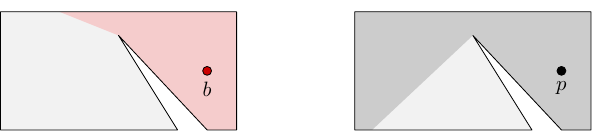}%
    \fi%

    \caption{%
    The \emph{attraction region} of a beacon $b$ (left) and the \emph{inverse attraction region} of a point $p$ (right).}%
    \label{fig:intro:attraction-regions}
\end{figure}

The PhD thesis of Biro~\cite{Biro13} 
constitutes the first in-depth study of 
beacon-based visibility. In particular, 
it considers beacon-based routing and 
guarding in (two\hyp{}dimensional) 
polygonal domains. The idea of 
beacon-based  routing is as follows:
suppose we have a polygonal domain $P$ 
that contains a set $B$ of beacons, and 
suppose we want to route a 
point\hyp{}shaped object $p$ towards a 
target $t$. We assume that 
$t$ can also act as a beacon, even if it 
is not contained in $B$. The 
routing proceeds by successive 
activation of beacons in $B \cup \{t\}$:
a first beacon $b_1 \in B$ is enabled 
to attract $p$ until it reaches $b_1$.
Subsequently, $b_1$ is disabled, and a 
second beacon $b_2 \in B$ is  switched 
on, again attracting $p$ until it 
reaches $b_2$. This is 
repeated until the last (implicit) 
beacon at $t$ is enabled and finally 
attracts $p$ to its location. The 
challenge is to devise a strategy for 
placing the beacons in $P$ and for 
choosing a sequence of beacon 
activations such that it becomes 
possible to route between any two 
locations $s$ and $t$ in $P$. The 
size of $B$ should be minimized. Note 
that we require that every activated 
beacon must attract $p$ until it 
reaches the beacon's location. Only 
then are we allowed to enable the next 
beacon.  Thus, if $p$ gets stuck, the 
process ends and the routing is 
considered to be unsuccessful.

In beacon-based guarding (or coverage), 
the goal is to choose a minimum-size set $B$ 
of beacons such that the union of the 
attraction regions for
$B$ covers the whole polygonal domain 
$P$. In this case, we say that $B$ 
\emph{covers} $P$. This is 
analogous to the classic 
art-gallery problem~\cite{ORourke87}, 
using beacon-based visibility instead 
of straight-line visibility.

\subsection{Related Work}%
\label{sec:related-work}

\paragraph{Two dimensions}
As mentioned above, a large part of 
the pioneering work on beacon-based 
routing and guarding was done by 
Biro and his co-authors~\cite{BiroGaIwKoMi11,BiroIwKoMi13,BiroGaIwKoMi13}.
An extensive collection of results 
can be found in Biro's  
PhD thesis~\cite{Biro13}.

Biro and his co-authors showed that
$\floor*{\textfrac{n}{2}} - 1$ beacons 
always suffice and sometimes are necessary 
for routing in a simple polygon with $n$ 
vertices~\cite[Theorem~1]{BiroGaIwKoMi13}.
We will discuss this result in more 
detail in Section~\ref{sec:preliminaries}.  
More generally, to route in a polygon with 
$n$ vertices and $h$ holes, $\floor*{\textfrac{n}{2}} - h - 1$ 
beacons are sometimes necessary and 
$\floor*{\textfrac{n}{2}} + h - 1$ beacons are 
always sufficient~\cite[Theorem~2]{BiroGaIwKoMi13}.
For \emph{orthogonal} polygons\footnote{A
planar polygon is \emph{orthogonal} if all its
edges are parallel to the $x$- or the $y$-axis.}, 
they showed only a loose lower bound of 
$\floor*{\textfrac{n}{4}} - 1$ beacons, 
leaving a larger gap for the routing 
problem~\cite[Theorem~3]{BiroGaIwKoMi13}.

For beacon\hyp{}based guarding of a simple 
polygon and of a polygon with $h$ holes, 
they showed that $\floor*{\textfrac{4n}{13}}$ 
beacons are sometimes necessary, while 
$\floor*{\textfrac{n + h}{3}}$ beacons are 
always sufficient~\cite[Theorem~5]{BiroGaIwKoMi13}.
In particular, the upper bound for simple polygons 
is $\floor*{\textfrac{n}{3}}$. For orthogonal polygons, 
they obtained an upper bound of 
$\floor*{\textfrac{n}{4}}$ and a lower bound of 
$\floor*{\textfrac{n+4}{8}}$~\cite[Section~6]{BiroGaIwKoMi13}.

Bae~\etal~\cite{BaeShiVi16} improved 
some of these bounds
by showing that $\floor*{\textfrac{n}{6}}$ 
beacons are sometimes needed and always sufficient for 
beacon\hyp{}based guarding in orthogonal polygons.
They also proved that if the polygon is monotone
and orthogonal, the bound reduces to 
$\floor*{\textfrac{n+4}{8}}$.
The gap for routing in simple orthogonal polygons 
was finally closed by Shermer~\cite{Shermer15} who showed
that $\floor*{\textfrac{n - 4}{3}}$ beacons are always sufficient 
and sometimes necessary.

Aldana-Galv\'an~\etal~\cite{AldanaGalvanAlCaNeSoUrVe17b}
extended the notion of coverage to both the
interior and the exterior of a given polygon. They
proved that  
$\floor*{\textfrac{n}{4}}+1$ vertex beacons always suffice to 
simultaneously cover the interior and exterior of an orthogonal 
polygon with $n$ 
vertices (possibly with holes)~\cite[Theorem~1]{AldanaGalvanAlCaNeSoUrVe17b}.
\cref{tab:best-known-2d-beacon-results} gives an
overview of the currently best results for 
routing and guarding in two dimensions.

\begin{table}[t]
    \centering
    \small
    \begin{tabular}{Cl Cl Cl Cl Cl}
        \toprule
        &  & \multicolumn{2}{c}{Bound} &
        \\
        Problem & Polygon type
	& \multicolumn{1}{c}{Lower} & \multicolumn{1}{c}{Upper} & Reference
        \\\midrule
        & Simple &
        \multicolumn{2}{c}{$\floor*{\textfrac{n}{2}} - 1$} &
        \cite[Thm~1]{BiroGaIwKoMi13}
        \\%[.25em]
        Routing & With holes &
        $\floor*{\textfrac{n}{2}} - h - 1$ (*) &
        $\floor*{\textfrac{n}{2}} + h - 1$ & \cite[Thm~2]{BiroGaIwKoMi13}
        \\%[.25em]
        & Orthogonal &
        \multicolumn{2}{c}{$\floor*{\textfrac{n-4}{3}}$} & \cite{Shermer15}
        \\%\addlinespace
        \midrule
        & Simple & $\floor*{\textfrac{4n}{13}}$ &
        $\floor*{\textfrac{n}{3}}$ (*) & \cite[Thm~5]{BiroGaIwKoMi13}
        \\%[.25em]
        Guarding & With holes & $\floor*{\textfrac{4n}{13}}$
        & $\floor!{\textfrac{n+h}{3}}$ (*) & \cite[Thm~5]{BiroGaIwKoMi13}
        \\%[.25em]
        & Orthogonal &
        \multicolumn{2}{c}{\(\floor*{\textfrac{n}{6}}\)} &
        \cite{BaeShiVi16}
        \\\bottomrule
    \end{tabular}
    \caption{The currently best results in two dimensions.
    The bounds marked (*) were conjectured to be tight
    by Biro~\cite[Conjectures~6.3.3, 7.3.7, and 7.3.9]{Biro13}}%
    \label{tab:best-known-2d-beacon-results}
\end{table}

So far, we have only discussed results that give
combinatorial bounds on the number of beacons
needed to guard or to route in certain
classes of polygons. Naturally, the notions
of beacon-based routing and guarding also lead
to interesting algorithmic questions.
As is to be expected, several optimization
problems associated with beacons are hard:
Biro~\cite[Theorems~6.2.2, 6.2.3, and 6.2.4]{Biro13}
showed that the \textsc{All-Pair}, \textsc{All-Sink},
and \textsc{All-Source} variants of the optimal
beacon routing problem are NP-hard. In these
problems, we are given a simple polygon $P$, and we
need to find a minimum set $B$ of beacons such
that we can route between any pair of points
in $P$; from a given location $s \in P$
to all other points in $P$; or from all points
in $P$ to a given location $t \in P$, respectively.
Biro also showed that given a simple polygon $P$,
it is NP-hard to find a minimum set of beacons 
that covers $P$~\cite[Theorem~7.2.1]{Biro13}.

On the positive side,
Biro~\etal~\cite[Theorem~6]{BiroIwKoMi13} 
presented an algorithm to compute the attraction
region of a given beacon in a polygon $P$ with $n$
vertices and $h$ holes in $O(n + h  \log^{1 + \eps} h)$
time and $O(n)$ space, for any fixed $\eps > 0$.
They also described how to find the \emph{inverse} attraction
region of a point in a polygon $P$ with $n$ vertices in 
$O(n^2)$ time~\cite[Theorem~8]{BiroIwKoMi13}. More generally,
the inverse attraction region of a polygonal region $R$ in $P$ with
$m$ vertices can be computed in $O(n^2m^2)$ 
time~\cite[Theorem~8]{BiroIwKoMi13}. As for routing,
Biro~\etal show how to find a \emph{minimum-hop-beacon path}
between two points $s$ and $t$ in a polygon with $n$ vertices
and $h$ holes from
a given set of $m$ beacons  in $O(m(n + h\log^{1+\eps} h + m \log h))$
time~\cite[Theorem~11]{BiroIwKoMi13}. They also provide a 
$O(n^3)$-time $2$-approximation algorithm for the case that
the beacons can be placed arbitrarily inside the polygon.
As the authors point out, this approximation algorithm can 
also be applied repeatedly to obtain a PTAS.
More recently, 
Kostitsyna~\etal~\cite{KostitsynaKoLaRa18} gave an optimal
algorithm to compute the inverse beacon
attraction region of a point in a simple
polygon in $O(n \log n)$ time.
Further algorithmic results can be found
in Kouhestani's PhD thesis~\cite{Kouhestani16}.

\paragraph{Three dimensions}

This work is based on the Master's thesis of the first 
author~\cite{Cleve17} who presented the first
combinatorial bounds for 
beacon\hyp{}based routing in three dimensions.
In his thesis, Cleve also 
showed that Biro's NP-hardness and APX-hardness results for 
optimum beacon routing extend to three dimensions, 
by a simple lifting argument~\cite[Section~4.3]{Cleve17}.
Finally, he constructed
a three-dimensional polyhedron that cannot be 
guarded by placing a beacon at every vertex~\cite[Lemma~6.1]{Cleve17}.
Independently, and almost at the same time, 
Aldana-Galv\'an~\etal~\cite[Section~2]{AldanaGalvanAlCaNeSoUrVe17a} 
obtained a stronger result: there exists an 
\emph{orthogonal} polyhedron that
cannot be covered by beacons at every vertex.
Furthermore, Aldana-Galv\'an~\etal~\cite[Theorem~1]{AldanaGalvanAlCaNeSoUrVe17a} 
showed that every \emph{orthotree}\footnote{ 
An orthotree is an orthogonal polyhedron made out of boxes 
that are glued face to face and whose dual graph is a tree.} 
with $n$ vertices can 
be covered by $\floor*{\textfrac{n}{8}}$ beacons. They 
described a family of orthotrees where this number of beacons is 
needed.
They also proved a tight bound of 
$\floor*{\textfrac{n}{12}}$ becons for \emph{well-separated} 
orthotrees.\footnote{An orthotree is well-separated if 
its dual graph has the property that all neighbors of a vertex 
with degree strictly greater than $2$ have degree at most $2$.}
Shortly afterwards, Aldana-Galv\'an~\etal~\cite{AldanaGalvanAlCaNeSoUrVe17b} 
introduced the notion of \emph{edge beacons}. Here,
every point of an edge $e$ may exert a magnet pull on a point-shaped
object $p$, and 
$p$ always moves towards the point on $e$ closest to
it.
Aldana-Galv\'an~\etal~prove 
that $\floor*{\textfrac{m}{12}}$ edge beacons are 
always sufficient and sometimes $\floor*{\textfrac{m}{21}}$ 
edge beacons are necessary to cover an orthogonal polyhedron with $m$ 
edges~\cite[Theorems~3 and 4]{AldanaGalvanAlCaNeSoUrVe17b}.
If both the interior and the exterior of an orthogonal polyhedron 
should be covered simultaneously, $\floor*{\textfrac{m}{6}}$ 
is a tight bound for the number of edge beacons 
required~\cite[Theorem~5]{AldanaGalvanAlCaNeSoUrVe17b}.

    % !TEX root = ../paper.tex
% !TeX spellcheck = en-US

\section{Preliminaries}%
\label{sec:preliminaries}

We begin by reviewing the proof that
$\floor*{\textfrac{n}{2}} - 1$ beacons
are needed 
for routing in a simple polygon 
with $n$ vertices~\cite[Theorem~1]{BiroGaIwKoMi13}. 
This serves two purposes: on the one hand,
the argument serves as a starting point for our
three-dimensional bound; on the other hand,
it provides an opportunity to correct a 
slight gap in the published proof by 
Biro~\etal~\cite{BiroGaIwKoMi13}.\footnote{This
issue and a possible fix have also been discovered by
Tom Shermer, a fact personally communicated to us by Irina
Kostitsyna~\cite{Kostitsyna19},
but as
far as we know, no updated version of the proof
has been published to date.}

\subsection{Two-dimensional Upper Bound }

The following theorem states the main result
for beacon-based routing in two dimensions.

\begin{theorem}[{Biro~\etal~\cite[Theorem~1]{BiroGaIwKoMi13}}]
\label{thm:2d:biro-sharp-bound}
Let $P$ be a simple polygon with $n$ 
vertices.  Then, 
$\floor*{\textfrac{n}{2}} - 1$ beacons 
are sometimes necessary and always sufficient 
to route between any two points in $P$.
\end{theorem}

The strategy of Biro~\etal~\cite{BiroGaIwKoMi13} 
is as follows: they triangulate 
$P$ to obtain a partition into 
$n - 2$ triangles. Then, they place 
the beacons in $P$ with an inductive strategy. 
In each step, one beacon $b$
is placed, and at least two 
triangles are removed.
They claim that there is always
a way to position $b$ on the 
boundary of the remaining polygon such 
that the whole interior of the removed 
triangles can be seen from $b$.
The inductive procedure ends as soon as 
no more triangles are left. 
Biro~\etal~conclude 
that $\floor*{\textfrac{n}{2}} -1$ beacons 
suffice for routing. 

The technical heart of the argument lies
in an analysis of different 
triangle configurations. The goal is to show that by 
placing a single beacon, at least two 
triangles can be removed.
One configuration is as follows:\footnote{We follow the notation 
of the original work~\cite{BiroGaIwKoMi13}.}
we have a central triangle 
$\sig{2} = \triangle BCD$ with two 
adjacent triangles $\sig{1} = \triangle ABC$ 
and $\sig{3} = \triangle CDF$. 
Biro~\etal~\cite{BiroGaIwKoMi13} would like
to argue that one can position a beacon 
$b$ on the free edge $BD$ of \sig{2} 
such that the whole polygon $ABDFC$ 
is completely visible to $b$;
see \cref{fig:2d:three-triangles-visibility-from-paper}.
More precisely, their reasoning 
goes like this:

\begin{displaycquote}[p.~2]{BiroGaIwKoMi13}
The location $b$ along $BD$ is chosen so 
the pentagon $ABDFC$ is visible to $b$.
This is always possible, by placing $b$ 
on the correct side of lines $CF$ and 
$AC$.  Then, any point in triangles 
$\triangle ABC$, $\triangle BCD$, 
$\triangle CDF$ can be routed to or 
from $b$ as $b$ is \emph{visible} to 
each point in those triangles.
\end{displaycquote}

However, the condition that $b$ lies
to the right of $AC$ and to the left 
of $FC$ is not sufficient for the whole
pentagon $ABDFC$ to be visible from $b$. 
For this, $b$ must also be to the 
left of $AB$ and to the right of $FD$, 
i.e., in the visibility cone of both 
\sig{1} and \sig{3}.
\Cref{fig:2d:three-triangles-visibility-error} 
shows a situation where this 
cannot be done:
the line through $B$ and $D$ limits 
the visibility of any beacon $b$ 
in the relative interior of the  line 
segment $BD$. Moreover, if we place 
$b$ at $B$ or at $D$, then $b$ still 
cannot see the full pentagon.

\begin{figure}[tp]
    \centering
    \tikzmaybeexternalizenext{2d--three-triangles-visibility-from-paper}
    \ifbuildtikz%
    \input{tikz/2d--three-triangles-visibility-from-paper.tikz}%
    \else%
    \includegraphics{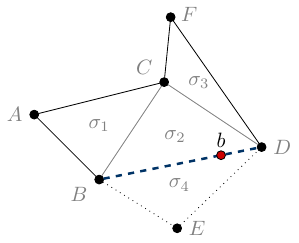}%
    \fi%

    \caption{The situation analyzed by Biro~\etal\cite{BiroGaIwKoMi13}.
             Here, $b$ can be placed near $D$ so that
         $b$ can see every point inside $ABDFC$.
         The edges $AB$, $AC$, $CF$, and $DF$ are boundary edges
         and $BD$ is a diagonal.}%
    \label{fig:2d:three-triangles-visibility-from-paper}
\end{figure}

\begin{figure}[tp]
    \centering
    \subfigwidth=0.47\textwidth
    \captionsetup[subfigure]{width=\subfigwidth,justification=raggedright}
    \tikzmaybeexternalizenext{2d--three-triangles-visibility-error}
    \subfloat[Here, $b$ cannot be placed on $BD$ to 
    see the full pentagon.]{
        \makebox[\subfigwidth][c]{%
    \ifbuildtikz%
    \input{tikz/2d--three-triangles-visibility-error.tikz}%
    \else%
    \includegraphics{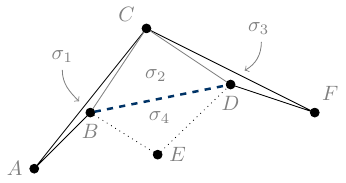}%
    \fi%
}%
        \label{fig:2d:three-triangles-visibility-error}
    }
    \hfill
    \subfigwidth=0.47\textwidth
    \captionsetup[subfigure]{width=\subfigwidth,justification=raggedright}
    \tikzmaybeexternalizenext{2d--three-triangles-attraction}
    \subfloat[No matter where $b$ lies on $BD$, 
    it cannot be attracted by both $A$ and $F$.]{
        \makebox[\subfigwidth][c]{%
    \ifbuildtikz%
    \input{tikz/2d--three-triangles-attraction.tikz}%
    \else%
    \includegraphics{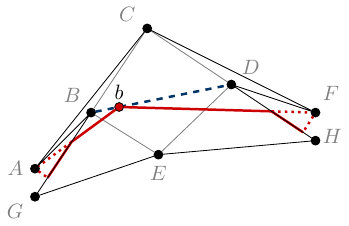}%
    \fi%
}%
        \label{fig:2d:three-triangles-attraction}
    }
    \caption{It is not always possible to place one beacon $b$ on the line 
    segment $BD$ such that it attracts and is attracted by all points 
    inside the pentagon $ABDFC$.}%
    \label{fig:2d:three-triangles-failures}
\end{figure}

Nonetheless, 
visibility is not actually required;
mutual attraction would be
enough for the argument to go through.
In fact, we can always place $b$ so that it 
attracts all points inside the pentagon $ABDFC$. 
Unfortunately, the inverse does not hold.
Consider \cref{fig:2d:three-triangles-attraction}:
unless $b$ is placed at $B$, a point-shaped
object at $b$ that is attracted by $A$ will
get stuck on the line segment $BG$; and analogously for
$D$ and $F$. 
Since $b$ cannot be placed
simultaneously
at both $B$ and 
$D$,
the requirement that $b$ is attracted 
by both $A$ and $F$  cannot be fulfilled.

Nevertheless, Theorem~\ref{thm:2d:biro-sharp-bound} 
still holds, as we will show in the 
following lemma.
For completeness, we present the proof
in full detail, and we indicate where 
we depart from the original argument of
Biro~\etal~\cite[Theorem~1]{BiroGaIwKoMi13}.

\begin{lemma}[Two-dimensional upper bound]%
\label{lem:2d:upper-bound}
Let $P$ be a simple polygon with $n \geq 2$ vertices. 
Then,
$\floor*{\textfrac{n}{2}}-1$ beacons are always sufficient 
to route between any two points in $P$.
\end{lemma}

\begin{proof}

The proof proceeds by induction on $n$.
For the base case, we assume that 
$2 \leq n \leq 4$. If $n \in \{2, 3\}$, 
then $P$ is either a line segment or a single 
triangle. In both cases, $P$ is convex and no 
beacon is needed.
For $n = 4$, we let $d$ be a diagonal of $P$.\footnote{A 
diagonal is a line segment whose endpoints are 
vertices of $P$ and whose relative interior 
lies in the interior of $P$.}
We place one beacon at an arbitrary point $b$ 
on $d$.
Then, every point $p \in P$ can see $b$,
which means that $p$ and $b$ mutually attract.
Thus, we can route from every 
$s \in P$ to every $t \in P$ via $b$.

Now suppose that $n > 4$ and assume that 
Lemma~\ref{lem:2d:upper-bound} holds for all
simple polygons with at most $n - 1$ vertices.
We triangulate $P$ and consider the dual graph 
$T$ of the triangulation: the triangles 
constitute the nodes, and two nodes are 
adjacent if and only if the corresponding 
triangles share an edge in the triangulation.
As $P$ is simple, $T$ is a tree with 
$n - 2$ nodes and maximum degree $3$. We take an arbitrary leaf of
$T$, and we declare it the root. 
Let \sig{1} be a triangle that corresponds
to a deepest leaf in $T$. Let \sig{2} be the
parent triangle of \sig{1}.
There are two cases:

\textbf{Case 1}:
the triangle \sig{1} is the only child of
\sig{2}. Let \sig{3} be the parent triangle of
\sig{2}. Then, the triangles \sig{1}, \sig{2}, 
and \sig{3} share a common vertex $v$, and
we place a beacon $b$ at $v$; 
see \cref{fig:2d:three-triangles-one-beacon}.
Next, we remove from $P$ 
the parts of \sig{1} and \sig{2} 
that do not belong to another triangle of $P$.
This gives a simple polygon $P_1$ 
with $n_1 = n - 2$ vertices.
By the inductive hypothesis, there is a
set $B_1$ of at most 
\[ 
\floor*{\frac{n_1}{2}} - 1 = 
\floor*{\frac{n - 2}{2}} - 1 = 
\floor*{\frac{n}{2}} - 2
\]
beacons that
allows us to route between any two
points in $P_1$.
We set $B = B_1 \cup \{b\}$. 
Then, we have $|B| \leq \floor*{\textfrac{n}{2}} - 1$.

It remains to show that we can use $B$ to route 
between any two points in $P$.
By the inductive hypothesis and because $b$ lies
in \sig{3} which remains in $P_1$, 
we can route between $b$ and any point in $P_1$.
Furthermore, due to convexity of triangles, every point 
$p \in \sig{1} \cup \sig{2}$ can see $b$, and thus 
$p$ can attract $b$ and can be attracted by it.
Hence, we can route between any pair of points 
in $P$ using $B$.

\begin{figure}[tbp]
    \centering
    \subfigwidth=0.235\textwidth
    \captionsetup[subfigure]{width=\subfigwidth,justification=raggedright}
    \tikzmaybeexternalizenext{2d--three-triangles-one-beacon}
    \subfloat[The beacon $b$ covers at least three 
    triangles: \sig{1}, \sig{2}, \sig{3}.]{
        \makebox[\subfigwidth][c]{%
    \ifbuildtikz%
    \input{tikz/2d--three-triangles-one-beacon.tikz}%
    \else%
    \includegraphics{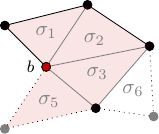}%
    \fi%
}%
        \label{fig:2d:three-triangles-one-beacon}
    }
    \hfill
    \subfigwidth=0.325\textwidth
    \captionsetup[subfigure]{width=\subfigwidth,justification=raggedright}
    \tikzmaybeexternalizenext{2d--four-triangles-two-beacons}
    \subfloat[The two beacons $b_1$ and $b_2$ 
    cover \sig{1}, \sig{2}, \sig{3}, and \sig{4} and 
    both neighbors of \sig{4}.]{
        \makebox[\subfigwidth][c]{%
    \ifbuildtikz%
    \input{tikz/2d--four-triangles-two-beacons.tikz}%
    \else%
    \includegraphics{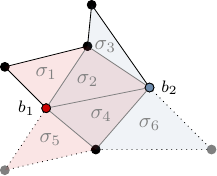}%
    \fi%
}%
        \label{fig:2d:four-triangles-two-beacons}
    }
    \hfill
    \subfigwidth=0.325\textwidth
    \captionsetup[subfigure]{width=\subfigwidth,justification=raggedright}
    \tikzmaybeexternalizenext{2d--four-triangles-peeled-off}
    \subfloat[After removing \sig{1} to \sig{4} two (possibly empty) 
    polygons $P_1$ and $P_2$ remain.]{
        \makebox[\subfigwidth][c]{%
    \ifbuildtikz%
    \input{tikz/2d--four-triangles-peeled-off.tikz}%
    \else%
    \includegraphics{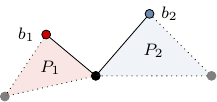}%
    \fi%
}%
        \label{fig:2d:four-triangles-peeled-off}
    }
    \caption{
        The two possible configurations in the inductive step are 
	shown in~\ref{sub@fig:2d:three-triangles-one-beacon} 
	and~\ref{sub@fig:2d:four-triangles-two-beacons}.
        \ref{sub@fig:2d:four-triangles-peeled-off} shows
	the situation 
	of~\ref{sub@fig:2d:four-triangles-two-beacons} after 
	removing the triangles.
    }%
    \label{fig:2d:remove-four-triangles-by-placing-two-beacons}
\end{figure}

\textbf{Case 2}:
the triangle \sig{2} has a second child \sig{3}. 
This is the erroneous case in 
Biro~\etal~\cite[Theorem~1]{BiroGaIwKoMi13}.
Let \sig{4} be the parent triangle of \sig{2}.
Since \sig{1} is a deepest leaf in $T$, if
follows that \sig{3} is also a leaf;
see \cref{fig:2d:four-triangles-two-beacons}.
Instead of placing a single beacon and removing 
three triangles, as suggested 
by Biro~\etal~\cite[Theorem~1]{BiroGaIwKoMi13},
we place two beacons $b_1$, $b_2$ and remove four triangles.
The beacon $b_1$ is placed at the common vertex of 
\sig{1}, \sig{2}, and \sig{4} (marked red), and 
$b_2$ is placed at the common vertex of 
\sig{3}, \sig{2}, and \sig{4} (marked blue).
If \sig{4} has more neighbors, 
they are also covered by $\{b_1, b_2\}$,
see \cref{fig:2d:four-triangles-two-beacons}.

We remove from $P$ the set 
$(\sig{1} \cup \sig{2} \cup \sig{3}) \setminus \{b_1, b_2\}$
and the interior of \sig{4}. This gives two polygons 
$P_1$ and $P_2$ with one common vertex, see
\cref{fig:2d:four-triangles-peeled-off}.
Possibly, $P_1$ or $P_2$ (or both) 
degenerates to a line segment 
from $b_1$ or $b_2$ to the common vertex.
Let $n_1 \geq 2$ be the number of vertices of $P_1$,
and $n_2 \geq 2$ the number of vertices of $P_2$.
We have $n_1 + n_2 = n - 2$, since we 
removed three vertices, and since $P_1$ and $P_2$ 
share one vertex to be counted twice.
As $n_1 \leq n - 1$ and $n_2 \leq n - 1$, 
we can apply the inductive hypothesis to 
$P_1$ and $P_2$. This gives two sets
$B_1 \subset P_1$ and $B_2 \subset P_2$ of 
beacons with $|B_1| \leq 
\floor*{\textfrac{n_1}{2}} - 1$ 
and $|B_2| \leq \floor*{\textfrac{n_2}{2}} - 1$.
We set $B = B_1 \cup B_2 \cup \{b_1, b_2\}$.
Then,
\begin{align*}
|B| &= |B_1| + |B_2| + 2 \leq 
\floor*{\frac{n_1}{2}} - 1 + \floor*{\frac{n_2}{2}} - 1 + 2\\
&=
\floor*{\frac{n_1}{2}} + \floor*{\frac{n_2}{2}}
\leq \floor*{\frac{n_1 + n_2}{2}} =
\floor*{\frac{n - 2}{2}} =
\floor*{\frac{n}{2}} - 1.
\end{align*}

It remains to show that we can route between 
any two points in $P$.
By the inductive hypothesis,
and since $b_1$ lies on the boundary of
$P_1$ and $b_2$ on the boundary of $P_2$, 
we can route between $b_1$ and any point in $P_1$,
and between $b_2$ and any point in $P_2$.
Moreover, since $b_1$ and $b_2$ both lie in \sig{2}, 
they can see and thus attract each other. 
Also, since every removed triangle 
\sig{1}, \sig{2}, \sig{3}, and \sig{4} contains
either $b_1$ or $b_2$, every point in 
$\bigcup_{i = 1}^4 \sig{i}$
can attract and be attracted by $b_1$ or $b_2$. 
It follows that for every point $p \in P$,
we can route between $p$ and $b_1$ or between
$p$ and $b_2$. Since we also can route
between $b_1$ and $b_2$, it follows that
we can route between any two points in
$P$.
\qedhere
\end{proof}

\textbf{Remark.} 
The \emph{extended abstract} for the
original paper by Biro~\etal~from 2011~\cite{BiroGaIwKoMi11},
available on Irina Kostitsyna's ResearchGate profile,
contains an alternative
proof for Theorem~\ref{thm:2d:biro-sharp-bound}.
This version
handles Case~2 slightly differently.
However, we believe that
it is susceptible to
the same issues as the more recent version 
of the proof~\cite{BiroGaIwKoMi13}.
More precisely,
in the alternative proof, the authors use the same notation 
as in
\cref{fig:2d:three-triangles-visibility-from-paper}.
They say that if
$\angle FCB > 3\pi/2$,
the beacon $b$ should be placed at $C$. 
From this, it follows that $\angle CBE \leq 3\pi/2$.
The authors claim that then,
\enquote{all points inside $\triangle BDE$ can reach $b$ and vice
versa}.
However, 
\cref{fig:2d:alternative-proof-error-1} 
shows a case where $E$ cannot 
attract $b$.
A similar counterexample applies for the symmetric case
where $\angle EBC > 3\pi/2$ and
$b$ is placed at $B$.
If both $\angle FCB\leq3\pi/2$ and $\angle EBC\leq 3\pi/2$,
then $b$ is to
be placed \enquote{arbitrarily at either $B$ or $C$},
but \cref{fig:2d:alternative-proof-error-2} shows
a configuration where
both positions cannot be attracted by all points inside the
four triangles.

\begin{figure}[tp]
    \centering
    \subfigwidth=0.47\textwidth
    \captionsetup[subfigure]{width=\subfigwidth,justification=raggedright}
    \tikzmaybeexternalizenext{2d--alternative-proof-error-1}
    \subfloat[%
        $\angle FCB>3\pi/2$ and $b$ is placed at $C$.
        However, despite $\angle EBC\leq 3\pi/2$, a beacon at $E$
        cannot attract an object at $b$.%
    ]{
        \makebox[\subfigwidth][c]{%
    \ifbuildtikz%
    \input{tikz/2d--alternative-proof-error-1.tikz}%
    \else%
    \includegraphics{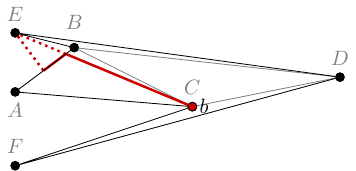}%
    \fi%
}%
        \label{fig:2d:alternative-proof-error-1}
    }
    \hfill
    \subfigwidth=0.47\textwidth
    \captionsetup[subfigure]{width=\subfigwidth,justification=raggedright}
    \tikzmaybeexternalizenext{2d--alternative-proof-error-2}
    \subfloat[%
        $\angle FCB\leq3\pi/2$ and $\angle EBC\leq 3\pi/2$.
        The beacon is to be placed arbitrarily at $B$ or $C$.
        However, for both positions it cannot be attracted by either
        $F$ or $E$.%
    ]{
        \makebox[\subfigwidth][c]{%
    \ifbuildtikz%
    \input{tikz/2d--alternative-proof-error-2.tikz}%
    \else%
    \includegraphics{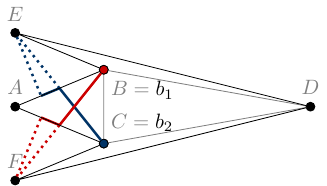}%
    \fi%
}%
        \label{fig:2d:alternative-proof-error-2}
    }
    \caption{Two counterexamples for the alternative proof of Biro~\etal~\cite{BiroGaIwKoMi11}.}%
    \label{fig:2d:alternative-proof-error}
\end{figure}

\subsection{Tetrahedral Decompositions}

To generalize the proof strategy from 
Theorem~\ref{thm:2d:biro-sharp-bound}
to $\R^3$, we need 
a three-dimensional analogue
of polygon triangulation:
the decomposition of a bounded polyhedron into tetrahedra.
This creates several difficulties
that are
not present in the two-dimensional
case.
In 1911, Lennes~\cite{Lennes11} showed that
there are
polyhedra that cannot be decomposed into 
tetrahedra without additional
\emph{Steiner points}. 
In fact, it is \NP-complete
to decide whether 
a tetrahedral decomposition without 
Steiner points exists~\cite{RuppertSe92}.
The \emph{size} of a tetrahedral decomposition
is the number of tetrahedra contained in it.
Unlike in two dimensions,
the size of a tetrahedral decomposition
may significantly exceed
the number of vertices in the polyhedron.
Chazelle~\cite{Chazelle84} showed that 
for any $n$, there exists a 
polyhedron with $\Theta(n)$ vertices for 
which any decomposition into convex
parts needs at least $\Omega(n^2)$ pieces.
On the other hand, Bern and 
Eppstein~\cite[Theorem~13]{BernEp95} described
how to decompose any polyhedron 
into $O(n^2)$ tetrahedra using 
$O(n^2)$ Steiner points.
Furthermore, a 
tetrahedral decomposition clearly must have
size at least 
$n - 3$.
A single  polyhedron may have 
different tetrahedral decompositions 
of varying sizes.
For example, the triangular bipyramid 
can be decomposed into two or three 
tetrahedra~\cite[p.~228]{RuppertSe92}.
Thus, our bounds 
will be in terms of the minimum size of a 
decomposition rather than the number of vertices.
Steiner points are allowed.

To extend the ideas for two dimensions to
$\R^3$, we must understand the 
dual graph of a tetrahedral decomposition.
This graph is defined as follows:

\begin{definition}
\label{def:3d-tetrahedra:dual-graph}
Given a tetrahedral decomposition
$\Sigma = \{\sig{1},\dots,\sig{m}\}$
of a three-dimensional polyhedron, 
the \emph{dual graph} $D(\Sigma)$ of
$\Sigma$ is the undirected graph 
with vertex set 
$\{\sig{1}, \dots, \sig{m}\}$ 
in which there is an edge between
two distinct tetrahedra \sig{i} and \sig{j}
if and only if 
\sig{i} and \sig{j} share
a triangular facet.
\end{definition}

Similarly to the
two-dimensional case, 
the dual graph  $D(\Sigma)$
of a tetrahedral decomposition
has maximum degree $4$.
However, unlike in
two dimensions, $D(\Sigma)$ is
not necessarily a tree.
The following lemma provides
a tool for placing beacons in 
connected subgraphs of $D(\Sigma)$.

\begin{lemma}\label{lem:3d-tetrahedra:x-tetrahedra-share-y}
Let $\Sigma$ be a tetrahedral 
decomposition of a 
three-dimensional polyhedron, and
let $D(\Sigma)$ be the dual graph
of $\Sigma$. Consider a
set $ S\subseteq \Sigma$ of tetrahedra such that
the induced subgraph $D(S)$ of $D(\Sigma)$ is 
connected.
Then,
\begin{theoremcases}
\item\label{lem:3d-tetrahedra:x-tetrahedra-share-y:two-share-face}
if $\abs{S} = 2$,
the tetrahedra in $S$ share a triangular 
facet;
\item\label{lem:3d-tetrahedra:x-tetrahedra-share-y:three-share-edge}
if $\abs{S} = 3$, the tetrahedra in $S$ 
share one edge; and
\item\label{lem:3d-tetrahedra:x-tetrahedra-share-y:four-share-vertex}
if  $\abs{S} = 4 $, the tetrahedra in 
$S$ share at least one vertex.
\end{theoremcases}
\end{lemma}
\begin{proof}
We consider the three cases separately.

\textbf{Case (i)}:
this follows directly from 
\cref{def:3d-tetrahedra:dual-graph}.

\textbf{Case (ii)}:
since $D(S)$ is connected and since
$|S| = 3$, there is a tetrahedron
$\sigma \in S$ adjacent to
the other two.
By \cref{def:3d-tetrahedra:dual-graph}, 
this means that
$\sigma$ shares a 
facet with each of the other two tetrahedra.
Since $\sigma$ is a tetrahedron,
any two facets in $\sigma$ share 
an edge. The claim follows.

\textbf{Case (iii)}: see
\cref{fig:3d-tetrahedra:configurations-of-four-tetrahedra}.
Let $S' \subset S$ be 
three tetrahedra in $S$ so that $D(S')$ 
is connected.
By~(ii), 
the tetrahedra in $S'$ share
an edge $e$.
By \cref{def:3d-tetrahedra:dual-graph}, the remaining 
tetrahedron in $S \setminus S'$ 
shares a facet $f$ with a
tetrahedron $\sigma \in S'$.
Since $e$ contains two vertices
of $\sigma$ while $f$ contains three vertices, 
$e$ and $f$ must share at least one vertex.
The claim follows.
\end{proof}

\begin{figure}[tbp]
\centering

\subfigwidth=0.29\textwidth
\captionsetup[subfigure]{width=\subfigwidth,justification=raggedright}
\tikzmaybeexternalizenext{3d-tetrahedra--configurations-of-four-tetrahedra--star}
\subfloat[One tetrahedron in the center has all 
other tetrahedra as neighbors.]{
\makebox[\subfigwidth][c]{%
    \ifbuildtikz%
    \input{tikz/3d-tetrahedra--configurations-of-four-tetrahedra--star.tikz}%
    \else%
    \includegraphics{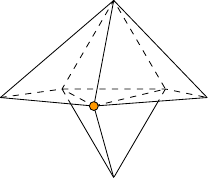}%
    \fi%
}%
        \label{fig:3d-tetrahedra:configurations-of-four-tetrahedra:star}
    }
\hfill
    \subfigwidth=0.29\textwidth
    \captionsetup[subfigure]{width=\subfigwidth,justification=raggedright}
    \tikzmaybeexternalizenext{3d-tetrahedra--configurations-of-four-tetrahedra--line}
    \subfloat[Two tetrahedra with one and two 
    tetrahedra with two neighbors.]{
        \makebox[\subfigwidth][c]{%
    \ifbuildtikz%
    \input{tikz/3d-tetrahedra--configurations-of-four-tetrahedra--line.tikz}%
    \else%
    \includegraphics{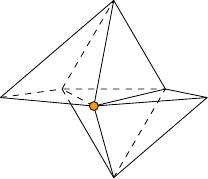}%
    \fi%
}%
        \label{fig:3d-tetrahedra:configurations-of-four-tetrahedra:line}
    }
    \hfill
    \subfigwidth=0.29\textwidth
    \captionsetup[subfigure]{width=\subfigwidth,justification=raggedright}
    \tikzmaybeexternalizenext{3d-tetrahedra--configurations-of-four-tetrahedra--line-shared-edge}
    \subfloat[All four tetrahedra share one edge.]{
        \makebox[\subfigwidth][c]{%
    \ifbuildtikz%
    \input{tikz/3d-tetrahedra--configurations-of-four-tetrahedra--line-shared-edge.tikz}%
    \else%
    \includegraphics{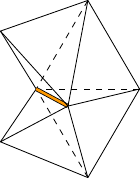}%
    \fi%
}%
        \label{fig:3d-tetrahedra:configurations-of-four-tetrahedra:line-shared-edge}
    }
    \caption{%
       The three possible configuration for a
       polyhedron with a decomposition 
	into four tetrahedra.
        The shared vertex or edge is marked.
    }%
    \label{fig:3d-tetrahedra:configurations-of-four-tetrahedra}
\end{figure}

    % !TEX root = ../paper.tex
% !TeX spellcheck = en-US

\section{An Upper Bound for Beacon\hyp{}based Routing}%
\label{sec:3d-tetrahedra:upper-bound}

We now give an upper bound on the number 
of beacons needed to route within a polyhedron,
extending the
strategy of Biro~\etal~\cite{BiroGaIwKoMi13},
as described in Section~\ref{sec:preliminaries},
to three dimensions.
We want to show the following:
\begin{theorem}%
\label{lem:3d-tetrahedra:upper-bound}
Let $P$ be a three-dimensional polyhedron,
and let $\Sigma$ be a tetrahedral 
decomposition of $P$ of size $m$.
There is a set
of at most $\floor*{\textfrac{m+1}{3}}$ 
beacons that allows us to route between any 
pair of points in $P$.
\end{theorem}
The rest of this section 
is dedicated to the inductive proof of 
Theorem~\ref{lem:3d-tetrahedra:upper-bound}.
The following lemma constitutes the base
case of the induction.

\begin{lemma}[Base case]%
\label{lem:3d-tetrahedra:upper-bound:base-case}
Let $P$ be a three-dimensional polyhedron,
and let $\Sigma$ be a tetrahedral 
decomposition of $P$ of size $m \leq 4$.
There is a set
of at most $\floor*{\textfrac{m+1}{3}}$ 
beacons that allows us to route between any 
pair of points in $P$.
\end{lemma}

\begin{proof}
If $m = 1$, then $P$ is a 
convex tetrahedron,  and no beacon is needed.
If $m \in \{2, 3, 4\}$, we apply 
\cref{lem:3d-tetrahedra:x-tetrahedra-share-y} 
to obtain a vertex $v$ that is common to
all tetrahedra in $\Sigma$.
We place one beacon $b$ at $v$.
By convexity, every point in $P$ can 
attract and be attracted by $b$,
and the claim follows.
\end{proof}

We proceed to the inductive step. 
For this, we consider a
tetrahedral decomposition $\Sigma$ 
of size $m > 4$.
Our goal is to place $k$ beacons,
for some $k \geq 1$, such that the
beacons
lie in at least $3k + 1$ 
tetrahedra and therefore can attract
and can be attracted by all points in 
those tetrahedra.
Then, we remove at least $3k$ 
tetrahedra, leaving a polyhedron with 
a tetrahedral decomposition of
size strictly less than $m$. We apply 
induction, 
and then show how to route between 
the smaller polyhedron and the removed tetrahedra.

\begin{figure}[tb]
    \centering
    \subfigwidth=0.29\textwidth
    \captionsetup[subfigure]{width=\subfigwidth,justification=raggedright}
    \tikzmaybeexternalizenext{3d-tetrahedra--upper-bound--1_3_4-2s}
    \subfloat[Remove \sig{1}, \sig{3}, and \sig{4} by placing a beacon where all four tetrahedra meet.]{
        \makebox[\subfigwidth][c]{%
    \ifbuildtikz%
    \input{tikz/3d-tetrahedra--upper-bound--1_3_4-2s.tikz}%
    \else%
    \includegraphics{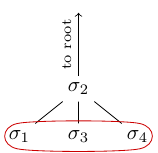}%
    \fi%
}%
        \label{fig:3d-tetrahedra:upper-bound:1_3_4-2*}
    }
    \hfill
    \subfigwidth=0.29\textwidth
    \captionsetup[subfigure]{width=\subfigwidth,justification=raggedright}
    \tikzmaybeexternalizenext{3d-tetrahedra--upper-bound--1_3-2-4s}
    \subfloat[Remove \sig{1}, \sig{2}, and \sig{3} by placing a beacon where all four tetrahedra meet.]{
        \makebox[\subfigwidth][c]{%
    \ifbuildtikz%
    \input{tikz/3d-tetrahedra--upper-bound--1_3-2-4s.tikz}%
    \else%
    \includegraphics{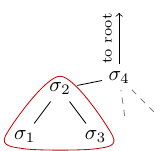}%
    \fi%
}%
        \label{fig:3d-tetrahedra:upper-bound:1_3-2-4*}
    }
    \hfill
    \subfigwidth=0.29\textwidth
    \captionsetup[subfigure]{width=\subfigwidth,justification=raggedright}
    \tikzmaybeexternalizenext{3d-tetrahedra--upper-bound--1-2-3-4s}
    \subfloat[Remove \sig{1}, \sig{2}, and \sig{3} by placing a beacon where all four tetrahedra meet.]{
        \makebox[\subfigwidth][c]{%
    \ifbuildtikz%
    \input{tikz/3d-tetrahedra--upper-bound--1-2-3-4s.tikz}%
    \else%
    \includegraphics{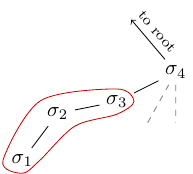}%
    \fi%
}%
        \label{fig:3d-tetrahedra:upper-bound:1-2-3-4*}
    }

    \vspace{1em}

    \subfigwidth=0.29\textwidth
    \captionsetup[subfigure]{width=\subfigwidth,justification=raggedright}
    \tikzmaybeexternalizenext{3d-tetrahedra--upper-bound--1-2_4-3s}
    \subfloat[Remove \sig{1}, \sig{2}, and \sig{4} by placing a beacon where all four tetrahedra meet.]{
        \makebox[\subfigwidth][c]{%
    \ifbuildtikz%
    \input{tikz/3d-tetrahedra--upper-bound--1-2_4-3s.tikz}%
    \else%
    \includegraphics{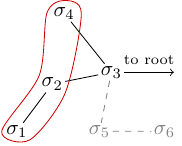}%
    \fi%
}%
        \label{fig:3d-tetrahedra:upper-bound:1-2_4-3*}
    }
    \hfill
    \subfigwidth=0.29\textwidth
    \captionsetup[subfigure]{width=\subfigwidth,justification=raggedright}
    \tikzmaybeexternalizenext{3d-tetrahedra--upper-bound--1_5_7-2_4_6-3s}
    \subfloat[Remove \sig{1}, \sig{2}, \sig{4}, and \sig{5} by placing a beacon where \sig{1} to \sig{5} meet.]{
        \makebox[\subfigwidth][c]{%
    \ifbuildtikz%
    \input{tikz/3d-tetrahedra--upper-bound--1_5_7-2_4_6-3s.tikz}%
    \else%
    \includegraphics{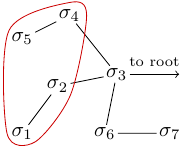}%
    \fi%
}%
        \label{fig:3d-tetrahedra:upper-bound:1_5_7-2_4_6-3*}
    }
    \hfill
    \subfigwidth=0.29\textwidth
    \captionsetup[subfigure]{width=\subfigwidth,justification=raggedright}
    \tikzmaybeexternalizenext{3d-tetrahedra--upper-bound--1_5-2_4-3-6s}
    \subfloat[The number and configuration of \sig{6}'s children must be looked at.]{
        \makebox[\subfigwidth][c]{%
    \ifbuildtikz%
    \input{tikz/3d-tetrahedra--upper-bound--1_5-2_4-3-6s.tikz}%
    \else%
    \includegraphics{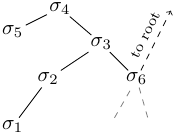}%
    \fi%
}%
        \label{fig:3d-tetrahedra:upper-bound:1_5-2_4-3-6*}
    }
    \caption{The possible configurations in the first part of the inductive step.}%
    \label{fig:3d-tetrahedra:upper-bound}
\end{figure}

To do this, 
we look at the dual graph $D(\Sigma)$ 
of $\Sigma$, as in 
Definition~\ref{def:3d-tetrahedra:dual-graph}.
Let $T$ be a spanning tree of 
$D(\Sigma)$, rooted at an arbitrary 
leaf.
We do not
distinguish between nodes of $T$
and the corresponding
tetrahedra.
Let \sig{1} be a deepest leaf of $T$.
If there are multiple  such leaves, 
we choose \sig{1} such that its parent \sig{2} has 
the largest number of children, breaking 
ties arbitrarily.
\cref{fig:3d-tetrahedra:upper-bound} shows
the
different cases how $T$ can look like
around \sig{1} and \sig{2}.
First, we focus on 
\cref{fig:3d-tetrahedra:upper-bound:1_3_4-2*,fig:3d-tetrahedra:upper-bound:1_3-2-4*,fig:3d-tetrahedra:upper-bound:1-2-3-4*,fig:3d-tetrahedra:upper-bound:1-2_4-3*,fig:3d-tetrahedra:upper-bound:1_5_7-2_4_6-3*}.
In all five cases, $T$ must have at least one additional root 
node---either because $m \geq 5$
or because 
$T$ is  rooted at a leaf.
The situation in \cref{fig:3d-tetrahedra:upper-bound:1_5-2_4-3-6*} 
will be dealt with in \cref{lem:3d-tetrahedra:upper-bound:inductive-step-ii}.

\begin{lemma}[Inductive step I]%
\label{lem:3d-tetrahedra:upper-bound:inductive-step-i}
Let $P$ be a three-dimensional polyhedron, 
and $\Sigma$ a tetrahedral
decomposition of $P$ of size
$m \geq 5$. Let $T$ be a spanning
tree of the dual graph $D(\Sigma)$,
rooted at a leaf of $T$.
Let \sig{1} be a deepest leaf of $T$ 
with the maximum number of siblings, 
and \sig{2} its parent.
Assume that one of the following 
conditions holds:
\begin{theoremcases}
\item\label{lem:3d-tetrahedra:upper-bound:inductive-step-i:case1}
\sig{2} has exactly three children 
\sig{1}, \sig{3}, and \sig{4} 
(see \cref{fig:3d-tetrahedra:upper-bound:1_3_4-2*});
\item\label{lem:3d-tetrahedra:upper-bound:inductive-step-i:case2}
\sig{2} has exactly two children \sig{1} and \sig{3}, 
and a parent \sig{4} (\cref{fig:3d-tetrahedra:upper-bound:1_3-2-4*});
\item\label{lem:3d-tetrahedra:upper-bound:inductive-step-i:case3}
\sig{2} has exactly one child \sig{1} and is 
the only child of its parent \sig{3}, 
whose parent is \sig{4} (\cref{fig:3d-tetrahedra:upper-bound:1-2-3-4*});
\item\label{lem:3d-tetrahedra:upper-bound:inductive-step-i:case4}
\sig{2} has exactly one child \sig{1} and its parent 
\sig{3} has two or three children at least one of which,
say \sig{4}, is a leaf (\cref{fig:3d-tetrahedra:upper-bound:1-2_4-3*}); 
or
\item\label{lem:3d-tetrahedra:upper-bound:inductive-step-i:case5}
\sig{2} has exactly one child \sig{1} and its parent \sig{3} 
has three children, each of which has a single leaf 
child (\cref{fig:3d-tetrahedra:upper-bound:1_5_7-2_4_6-3*}).
\end{theoremcases}
Then, we can place a beacon $b$ at a vertex of 
\sig{1} such that $b$ lies in at least four tetrahedra.
After that, we can remove at least three of these tetrahedra
so that $T$ stays a tree and at
least one remaining tetrahedron in $T$ contains $b$.
\end{lemma}

\begin{proof}
We consider the cases individually.

\textbf{Cases (i--iv)}:
in each case, the induced 
subgraph on $\{\sig{1}, \sig{2}, \sig{3}, \sig{4}\}$ 
is connected.  Thus, 
\cref{lem:3d-tetrahedra:x-tetrahedra-share-y}\ref{lem:3d-tetrahedra:x-tetrahedra-share-y:four-share-vertex} 
implies that the four tetrahedra share a
vertex $v$. We place $b$ at $v$. 
After that, we remove either \sig{1}, \sig{3}, and \sig{4} 
(\cref{lem:3d-tetrahedra:upper-bound:inductive-step-i:case1}); 
\sig{1}, \sig{2}, and \sig{3} 
(\cref{lem:3d-tetrahedra:upper-bound:inductive-step-i:case2,lem:3d-tetrahedra:upper-bound:inductive-step-i:case3}); 
or \sig{1}, \sig{2}, and \sig{4} 
(\cref{lem:3d-tetrahedra:upper-bound:inductive-step-i:case4}).
In each case, we remove either only leaves or 
inner nodes with all their children. 
This means that the tree structure of $T$ is preserved.
Moreover, we only remove three of the four tetrahedra that 
contain $b$, so one of them remains in $T$.

\textbf{Case (v)}:
as shown in \cref{fig:3d-tetrahedra:upper-bound:1_5_7-2_4_6-3*}, 
we have three connected sets, each containing \sig{3}, 
a child \sig{i} of \sig{3}, and \sig{i}'s child:
$\{\sig1, \sig2, \sig3\}$, $\{\sig5, \sig4, \sig3\}$, 
and $\{\sig7,\sig6,\sig3\}$.
By \cref{lem:3d-tetrahedra:x-tetrahedra-share-y}\ref{lem:3d-tetrahedra:x-tetrahedra-share-y:three-share-edge}, 
each set has a common edge. 
These three edges all occur in
\sig{3}, and since \sig{3} is a tetrahedron,
at least two of them share
an endpoint $v$.
Without loss of generality, let these be the common edges of 
$\{\sig1, \sig2, \sig3\}$ and of $\{\sig5, \sig4, \sig3\}$.
We place $b$ at $v$, and we remove 
\sig{1}, \sig{2}, \sig{4}, and \sig{5}.
The beacon $b$ is also contained in \sig{3}, which remains in $T$.
\end{proof}

The final configuration 
is shown in \cref{fig:3d-tetrahedra:upper-bound:1_5-2_4-3-6*}.
The following lemma provides an analysis of how the
tetrahedra can intersect in this case.

\begin{lemma}%
\label{lem:3d-tetrahedra:central-tetrahedron-1-subtree-with-5}
Let $\Sigma$ be a tetrahedral decomposition
of size $6$, and suppose that 
$D(\Sigma)$ has a spanning tree as
in \cref{fig:3d-tetrahedra:central-tetrahedron-1-subtree-with-5:dual-graph}.
Then at least one of the following holds:
\begin{theoremcases}
\item
\label{lem:3d-tetrahedra:central-tetrahedron-1-subtree-with-5:shared-vertex}
\sig{1}, \sig{2}, \sig{3}, \sig{4}, and \sig{5} have a common vertex; or
\item
\label{lem:3d-tetrahedra:central-tetrahedron-1-subtree-with-5:shared-vertex-and-shared-edge}
\sig{3}, \sig{4}, \sig{5}, and \sig{6} share a common vertex 
$v$; \sig{1}, \sig{2}, \sig{3}, and \sig{6} share a common 
edge $e$; and $v \cap e = \emptyset$. A symmetric situation is
also possible.
\end{theoremcases}
\end{lemma}
\begin{figure}[tp]
    \centering
    \subfigwidth=0.2\textwidth
    \captionsetup[subfigure]{width=\subfigwidth,justification=raggedright}
    \tikzexternalizenext{3d-tetrahedra--central-tetrahedron-1-subtree-with-5--dual-graph}
    \subfloat[The dual graph of the tetrahedral decomposition.]{
        \makebox[\subfigwidth][c]{%
    \ifbuildtikz%
    \input{tikz/3d-tetrahedra--central-tetrahedron-1-subtree-with-5--dual-graph.tikz}%
    \else%
    \includegraphics{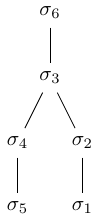}%
    \fi%
}%
        \label{fig:3d-tetrahedra:central-tetrahedron-1-subtree-with-5:dual-graph}
    }
    \hfill
    \subfigwidth=0.35\textwidth
    \captionsetup[subfigure]{width=\subfigwidth,justification=raggedright}
    \tikzmaybeexternalizenext{3d-tetrahedra--central-tetrahedron-1-subtree-with-5--3d-case2}
    \subfloat[The four tetrahedra on the left share a common vertex
    while the four tetrahedra on the right share a common edge.]{
\makebox[\subfigwidth][c]{%
    \ifbuildtikz%
    \input{tikz/3d-tetrahedra--central-tetrahedron-1-subtree-with-5--3d-case2.tikz}%
    \else%
    \includegraphics{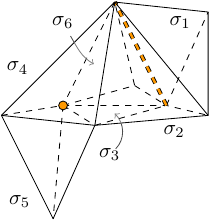}%
    \fi%
}%
        \label{fig:3d-tetrahedra:central-tetrahedron-1-subtree-with-5:3d-case2}
    }
    \hfill
    \subfigwidth=0.32\textwidth
    \captionsetup[subfigure]{width=\subfigwidth,justification=raggedright}
    \tikzmaybeexternalizenext{3d-tetrahedra--central-tetrahedron-1-subtree-with-5--3d-case1}
    \subfloat[All tetrahedra but the rearmost tetrahedron 
    \sig{6} share one common vertex, marked in orange.]{
\makebox[\subfigwidth][c]{%
    \ifbuildtikz%
    \input{tikz/3d-tetrahedra--central-tetrahedron-1-subtree-with-5--3d-case1.tikz}%
    \else%
    \includegraphics{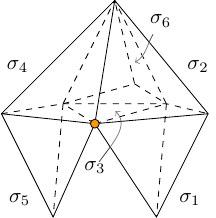}%
    \fi%
}%
        \label{fig:3d-tetrahedra:central-tetrahedron-1-subtree-with-5:3d-case1}
    }
    \caption{
\nocrefsort
A tetrahedron \sig{6} with a subtree of five tetrahedra.
Figures~\ref{sub@fig:3d-tetrahedra:central-tetrahedron-1-subtree-with-5:3d-case2} 
and~\ref{sub@fig:3d-tetrahedra:central-tetrahedron-1-subtree-with-5:3d-case1} 
depict configurations that satisfy 
\cref{lem:3d-tetrahedra:central-tetrahedron-1-subtree-with-5:shared-vertex-and-shared-edge,lem:3d-tetrahedra:central-tetrahedron-1-subtree-with-5:shared-vertex} of \cref{lem:3d-tetrahedra:central-tetrahedron-1-subtree-with-5}, 
respectively.
}%
\label{fig:3d-tetrahedra:central-tetrahedron-1-subtree-with-5}
\end{figure}
\begin{proof}
Let $S_1 = \{\sig{3}, \sig{4}, \sig{5}, \sig{6}\}$ 
and $S_2 = \{\sig{1}, \sig{2}, \sig{3}, \sig{6}\}$.
By \cref{lem:3d-tetrahedra:x-tetrahedra-share-y},
each set shares at least a vertex, but it may also share an edge.
There are three cases: 

\textbf{Case 1}: both $S_1$ and $S_2$ share an edge.
These edges must belong to the triangular facet 
that connects \sig{3} and \sig{6}.
Thus, they share a common vertex,
and~\ref{lem:3d-tetrahedra:central-tetrahedron-1-subtree-with-5:shared-vertex} 
holds.

\textbf{Case 2}:
exactly one of $S_1$, $S_2$ shares an edge $e$,
while the other shares only a vertex $v$. 
If $v \cap e = v$, 
then~\ref{lem:3d-tetrahedra:central-tetrahedron-1-subtree-with-5:shared-vertex} 
applies, and if
$v \cap e = \emptyset$, 
then~\ref{lem:3d-tetrahedra:central-tetrahedron-1-subtree-with-5:shared-vertex-and-shared-edge}
holds---see 
\cref{fig:3d-tetrahedra:central-tetrahedron-1-subtree-with-5:3d-case2}
for an example.

\textbf{Case 3}: both $S_1$ and $S_2$ share only
a vertex; see
\cref{fig:3d-tetrahedra:central-tetrahedron-1-subtree-with-5:3d-case1}.
Let  $v$ be the vertex of \sig{3} 
that is not in the facet shared by \sig{3} and \sig{6}. 
In 
\cref{fig:3d-tetrahedra:central-tetrahedron-1-subtree-with-5:3d-case1},
$v$ is marked orange.
Since \sig{2} is adjacent to \sig{3}, it follows that
\sig{2} contains $v$ and three of its four facets contain 
$v$. 
One of these facets is the shared facet with \sig{3}, 
and we claim that \sig{1} is placed at one of the
other two.
Indeed, \sig{1} cannot be located at the fourth facet of \sig{2}, 
since otherwise it would share an edge with \sig2, \sig{3} and \sig{6},
which is ruled out by the current case. 
Thus, $v \in \sig{1}$, and a symmetric argument 
shows that $v \in \sig{5}$.
It follows 
that~\ref{lem:3d-tetrahedra:central-tetrahedron-1-subtree-with-5:shared-vertex} 
holds.
\end{proof}

Now, we can proceed with
the inductive step for the configuration from
\cref{fig:3d-tetrahedra:upper-bound:1_5-2_4-3-6*}.
The problem is that to remove 
$\{\sig{1}, \dots, \sig{5}\}$, we need two beacons.
However, this does not meet our goal of handling at 
least $3k$ tetrahedra by placing $k$ beacons, for a
$k \geq 1$.
If we removed \sig{6} and if \sig{6} had additional 
children, the remaining dual graph might no longer
be connected, and we could not continue with our
induction. Thus, we must look at 
the (additional) subtrees of \sig{6}.

Since there are many possibilities, 
we wrote a short Python program to
generate all the cases. 
Our program enumerates all rooted, ordered,
ternary trees of height at most three. 
To each such tree, the program repeatedly applies
\cref{lem:3d-tetrahedra:upper-bound:inductive-step-i} 
to prune subtrees. If this
results in an empty tree, the case does not
need to be considered. If not, we save
the remaining tree for manual consideration, eliminating
isomorphic copies of the same tree.
The source code is in~\ref{appendix:trees-program}.
The program leaves us with nine different cases,
shown in 
\cref{fig:3d-tetrahedra:upper-bound-2}.
In each case,
the subtree from 
\cref{fig:3d-tetrahedra:upper-bound:1_5-2_4-3-6*} is present.
The following lemma explains how to place the beacons.

\begin{lemma}[Inductive step II]%
\label{lem:3d-tetrahedra:upper-bound:inductive-step-ii}
Let $P$ be a three-dimensional polyhedron, with a
tetrahedral decomposition $\Sigma$ of size
$m \geq 5$.
Let $T$ be a spanning tree of the dual graph $D(\Sigma)$, rooted 
at an arbitrary leaf.
Let $T' \subseteq T$ be a subtree of $T$ with height $3$ 
for which \cref{lem:3d-tetrahedra:upper-bound:inductive-step-i} 
cannot be applied; see \cref{fig:3d-tetrahedra:upper-bound-2}.

Then, there is a set $B$ of $k \geq 2$ beacons that
are vertices in at least $3k + 1$ 
tetrahedra from $T'$, such that the induced subgraph for $B$ on
$\Sigma$ is connected.
Furthermore, we can remove at least $3k$ 
tetrahedra, each containing a beacon
from $B$, so that $T$ remains connected and 
so that at least one remaining tetrahedron 
contains a beacon from $B$
\end{lemma}
\begin{figure}[tbp]
    \tikzset{every picture/.append style={yscale=0.475,xscale=0.8}}
    \centering
    \subfigwidth=0.13\textwidth
    \captionsetup[subfigure]{width=\subfigwidth,justification=raggedright}
    \tikzmaybeexternalizenext{3d-tetrahedra--upper-bound-2--case1}
    \subfloat[]{
        \makebox[\subfigwidth][c]{%
    \ifbuildtikz%
    \input{tikz/3d-tetrahedra--upper-bound-2--case1.tikz}%
    \else%
    \includegraphics{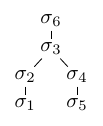}%
    \fi%
}%
        \label{fig:3d-tetrahedra:upper-bound-2:case1}
    }
    \hfill
    \subfigwidth=0.17\textwidth
    \captionsetup[subfigure]{width=\subfigwidth,justification=raggedright}
    \tikzmaybeexternalizenext{3d-tetrahedra--upper-bound-2--case2}
    \subfloat[]{
        \makebox[\subfigwidth][c]{%
    \ifbuildtikz%
    \input{tikz/3d-tetrahedra--upper-bound-2--case2.tikz}%
    \else%
    \includegraphics{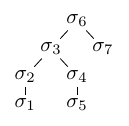}%
    \fi%
}%
        \label{fig:3d-tetrahedra:upper-bound-2:case2}
    }
    \hfill
    \subfigwidth=0.23\textwidth
    \captionsetup[subfigure]{width=\subfigwidth,justification=raggedright}
    \tikzmaybeexternalizenext{3d-tetrahedra--upper-bound-2--case3}
    \subfloat[]{
        \makebox[\subfigwidth][c]{%
    \ifbuildtikz%
    \input{tikz/3d-tetrahedra--upper-bound-2--case3.tikz}%
    \else%
    \includegraphics{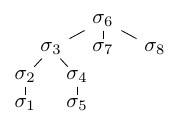}%
    \fi%
}%
        \label{fig:3d-tetrahedra:upper-bound-2:case3}
    }
    \hfill
    \subfigwidth=0.2\textwidth
    \captionsetup[subfigure]{width=\subfigwidth,justification=raggedright}
    \tikzmaybeexternalizenext{3d-tetrahedra--upper-bound-2--case4}
    \subfloat[]{
        \makebox[\subfigwidth][c]{%
    \ifbuildtikz%
    \input{tikz/3d-tetrahedra--upper-bound-2--case4.tikz}%
    \else%
    \includegraphics{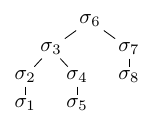}%
    \fi%
}%
        \label{fig:3d-tetrahedra:upper-bound-2:case4}
    }

    \subfigwidth=0.27\textwidth
    \captionsetup[subfigure]{width=\subfigwidth,justification=raggedright}
    \tikzmaybeexternalizenext{3d-tetrahedra--upper-bound-2--case5}
    \subfloat[]{
        \makebox[\subfigwidth][c]{%
    \ifbuildtikz%
    \input{tikz/3d-tetrahedra--upper-bound-2--case5.tikz}%
    \else%
    \includegraphics{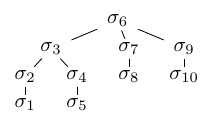}%
    \fi%
}%
        \label{fig:3d-tetrahedra:upper-bound-2:case5}
    }
    \hfill
    \subfigwidth=0.27\textwidth
    \captionsetup[subfigure]{width=\subfigwidth,justification=raggedright}
    \tikzmaybeexternalizenext{3d-tetrahedra--upper-bound-2--case6}
    \subfloat[]{
        \makebox[\subfigwidth][c]{%
    \ifbuildtikz%
    \input{tikz/3d-tetrahedra--upper-bound-2--case6.tikz}%
    \else%
    \includegraphics{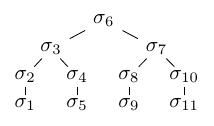}%
    \fi%
}%
        \label{fig:3d-tetrahedra:upper-bound-2:case6}
    }
    \hfill
    \subfigwidth=0.29\textwidth
    \captionsetup[subfigure]{width=\subfigwidth,justification=raggedright}
    \tikzmaybeexternalizenext{3d-tetrahedra--upper-bound-2--case7}
    \subfloat[]{
        \makebox[\subfigwidth][c]{%
    \ifbuildtikz%
    \input{tikz/3d-tetrahedra--upper-bound-2--case7.tikz}%
    \else%
    \includegraphics{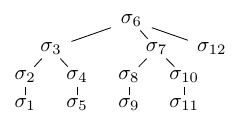}%
    \fi%
}%
        \label{fig:3d-tetrahedra:upper-bound-2:case7}
    }

    \subfigwidth=0.35\textwidth
    \captionsetup[subfigure]{width=\subfigwidth,justification=raggedright}
    \tikzmaybeexternalizenext{3d-tetrahedra--upper-bound-2--case8}
    \subfloat[]{
        \makebox[\subfigwidth][c]{%
    \ifbuildtikz%
    \input{tikz/3d-tetrahedra--upper-bound-2--case8.tikz}%
    \else%
    \includegraphics{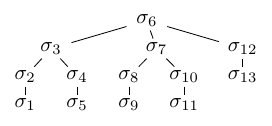}%
    \fi%
}%
        \label{fig:3d-tetrahedra:upper-bound-2:case8}
    }
    \hfill
    \subfigwidth=0.4\textwidth
    \captionsetup[subfigure]{width=\subfigwidth,justification=raggedright}
    \tikzmaybeexternalizenext{3d-tetrahedra--upper-bound-2--case9}
    \subfloat[]{
        \makebox[\subfigwidth][c]{%
    \ifbuildtikz%
    \input{tikz/3d-tetrahedra--upper-bound-2--case9.tikz}%
    \else%
    \includegraphics{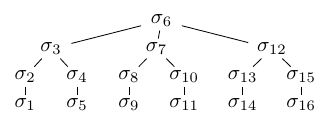}%
    \fi%
}%
        \label{fig:3d-tetrahedra:upper-bound-2:case9}
    }
    \tikzset{every picture/.append style={scale=1}}
    \caption{
The \enquote{nontrivial} configurations of children of \sig{6}.
The tree in~\ref{sub@fig:3d-tetrahedra:upper-bound-2:case1} is a 
subtree of all configurations.
In all cases, \sig{6} has no other children than those shown here.
Furthermore, since $T$ is rooted at a leaf node, \sig{6} 
needs to have an additional parent (except in 
case~\ref{sub@fig:3d-tetrahedra:upper-bound-2:case1}).
    }%
    \label{fig:3d-tetrahedra:upper-bound-2}
\end{figure}
\begin{proof}
We say that two beacons $b_1$ and 
$b_2$ \emph{share an edge} or 
are \emph{neighbors} if a 
a tetrahedron of 
$\Sigma$ contains an edge between the vertices
where $b_1$ and $b_2$ are placed.
We go through the cases.

\textbf{\cref{fig:3d-tetrahedra:upper-bound-2:case1}}: 
by \cref{lem:3d-tetrahedra:x-tetrahedra-share-y}\ref{lem:3d-tetrahedra:x-tetrahedra-share-y:four-share-vertex} 
the sets $\{\sig1, \sig2, \sig3, \sig6\}$ and 
$\{\sig6, \sig3, \sig4, \sig5\}$ each share one vertex, 
say $v_1$ and $v_2$, respectively.
If $v_1 \neq v_2$, we set
$B = \{v_1, v_2\}$.
If $v_1 = v_2$, we set $B = \{v_1, w\}$, where 
$w$ is any of the three other vertices of \sig{6}.
If \sig{6} has a parent tetrahedron, the shared facet 
contains three vertices of \sig{6} and hence at least one
beacon from $B$.
Thus, by placing $k = 2$ beacons,
we can remove the $6 = 3k$ tetrahedra $\{\sig{1}, \dots, \sig{6}\}$.

\textbf{\cref{fig:3d-tetrahedra:upper-bound-2:case2}}: 
we have the same situation as in
\cref{fig:3d-tetrahedra:upper-bound-2:case1},
except for the additional tetrahedron \sig{7}.
We choose $B$ as in 
\cref{fig:3d-tetrahedra:upper-bound-2:case1},
and we observe that \sig{6} contains two beacons.
Thus, \sig{7} contains at least one beacon from $B$.
Hence, by placing $k = 2$ beacons, we can remove 
the $7 > 3k$ tetrahedra
$\{\sig{1}, \dots, \sig{7}\}$.

\textbf{\cref{fig:3d-tetrahedra:upper-bound-2:case3}}: 
we apply the same argument as for 
\cref{fig:3d-tetrahedra:upper-bound-2:case2}, observing
that \sig{8} must also contain a beacon from $B$.
Thus, by placing $k = 2$ beacons, we can remove
the $8 > 3k$ tetrahedra
\sig{1} to \sig{8}.

\textbf{\cref{fig:3d-tetrahedra:upper-bound-2:case4}}: 
by 
\cref{lem:3d-tetrahedra:x-tetrahedra-share-y}\ref{lem:3d-tetrahedra:x-tetrahedra-share-y:three-share-edge},
the set $\{\sig6, \sig7, \sig8\}$ 
shares an edge $e$.
We apply 
\cref{lem:3d-tetrahedra:central-tetrahedron-1-subtree-with-5} 
to $\{\sig{1}, \dots, \sig{6}\}$. This gives two cases.
Case~(i): $\{\sig{1}, \dots,  \sig{5}\}$ share a vertex $v$.
As $v$ is in \sig{3}, and as \sig{3} shares a facet with \sig{6}, 
three neighboring vertices of $v$ are in \sig{6}.
The edge $e$ contains at least one of those three neighbors.
We call it $w$.
We set $B = \{v, w\}$.
Case~(ii): we obtain a vertex $v$ and an edge 
$e'$ in \sig{6}, with $v \cap e' = \emptyset$.
This covers three vertices of \sig{6}, so 
the edge $e$ shares at least one vertex with $v$ or
with $e'$.
To obtain $B$, we 
choose two vertices of \sig{6} such that $v$, $e'$, 
and $e$ each contain at least one.
In both cases, the beacons in $B$ are neighbors. 
We place $k = 2$ beacons,
and we remove the 
$7 > 3k$ tetrahedra
$\{\sig{1}, \dots,  \sig{5}, \sig{7}, \sig{8}\}$.

\textbf{\cref{fig:3d-tetrahedra:upper-bound-2:case5}}: 
by 
\cref{lem:3d-tetrahedra:x-tetrahedra-share-y}\ref{lem:3d-tetrahedra:x-tetrahedra-share-y:three-share-edge}, 
the sets $\{\sig6, \sig7, \sig8\}$ and 
$\{\sig6, \sig9, \sig{10}\}$ share edges 
$e_1$ and $e_2$, respectively.
We apply \cref{lem:3d-tetrahedra:central-tetrahedron-1-subtree-with-5} to 
$\{\sig{1}, \dots, \sig{6}\}$. This again gives two cases.
Case (i): $\{\sig{1}, \dots,  \sig{5}\}$ share a vertex $v$.
We set  $B = \{v, w_1, w_2\}$ 
such that $w_1$ and $w_2$ are 
vertices of \sig{6},
$|B| = 3$, and both edges $e_1$ and $e_2$ contain at 
least one beacon.
As in \cref{fig:3d-tetrahedra:upper-bound-2:case4}, $v$ 
is a neighbor of $w_1$ or $w_2$. Furthermore,  $w_1$ and $w_2$ 
are neighbors because they are vertices of \sig{6}.
Case (ii): we obtain a vertex $v$ and an edge 
$e$ in $\sig{6}$, with $v \cap e = \emptyset$.
We set $B = \{v, w_1, w_2\}$, where $w_1$ and $w_2$ are 
vertices of \sig{6}, such that $|B| = 3$ and such that all 
edges $e$, $e_1$, and $e_2$ contain at least one beacon.
Since all beacons lie in \sig{6}, they are mutual neighbors.
In both cases, we place $k = 3$ beacons such that 
every tetrahedron contains at least one.
We remove the $9 = 3k$ tetrahedra $\{\sig{1}, \dots, \sig{10}\} 
\setminus \{\sig{6}\}$.

\textbf{\cref{fig:3d-tetrahedra:upper-bound-2:case6}}: 
we apply 
\cref{lem:3d-tetrahedra:central-tetrahedron-1-subtree-with-5} 
to $\{\sig{1}, \dots, \sig{6}\}$ 
and to $\{\sig{6}, \dots,  \sig{11}\}$.
There are several cases.
Case~(i): each of 
$\{\sig{1},\dots,  \sig{5}\}$ and $\{\sig{7}, \dots, \sig{11}\}$ 
share a vertex, say $v_1$ and $v_2$, respectively.
By the argument from 
\cref{fig:3d-tetrahedra:upper-bound-2:case4},
three neighboring vertices of $v_1$ and three
neighboring vertices of $v_2$
are vertices of \sig{6}.
Thus, there is a vertex $v$ of \sig{6}
that is a neighbor of $v_1$ and of $v_2$.
We set $B = \{v, v_1, v_2\}$.
Case~(ii): without loss of generality, 
the set $\{\sig{1}, \dots,  \sig{5}\}$ shares
a vertex $v_1$
and the set $\{\sig{6}, \dots, \sig{11}\}$ 
has a vertex $v_2$ and an edge $e$ in 
\sig{6}, with $v_2\cap e = \emptyset$.
Then, at least one of the three vertices of \sig{6} 
that are neighbors of $v_1$ is covered by 
$v_2 \cup e$.
We set $B = \{v_1, v_2, w\}$, where
$w$ is an endpoint of $e$.
Case~(iii): $\{\sig{1}, \dots, \sig{6}\}$ 
have a vertex $v_1$ and an edge $e_1$ in \sig{6} and 
$\{\sig{6}, \dots, \sig{11}\}$ have a vertex $v_2$ 
and an edge $e_2$ in \sig{6}.
We choose for $B$ three vertices of \sig{6} 
such that $v_1$, $v_2$, $e_1$, and $e_2$ 
each contain at least one beacon.
In all cases, we place $k = 3$ beacons, so that
$B$ is connected and 
every tetrahedron in
$\{\sig{1}, \dots,  \sig{11}\}$ contains at least one beacon.
We remove $10 > 3k$ tetrahedra: all but \sig{6}.

\textbf{\cref{fig:3d-tetrahedra:upper-bound-2:case7}:} 
this is similar to \cref{fig:3d-tetrahedra:upper-bound-2:case6}.
We only describe how to ensure that $B$
contains a vertex of \sig{12}.
In Case~(i), $v$ can be placed at two vertices.
We choose the vertex that lies in \sig{12}.
This is always possible, as \sig{12} contains three 
of the four vertices of \sig{6}.
In Case~(ii), we choose $w$ as an
endpoint of $e$ that lies in \sig{12}.
The same argument as before applies.
In  Case~(iii), $B$ must contain a vertex of \sig{12},
since three beacons are at vertices of \sig{6}.
Thus, by placing $k = 3$ beacons, we remove 
$11 > 3k$ tetrahedra: all but \sig{6}.

\textbf{\cref{fig:3d-tetrahedra:upper-bound-2:case8}:} 
initially, we choose a set of beacons $B'$ as in
\cref{fig:3d-tetrahedra:upper-bound-2:case6},
at first
ignoring \sig{12} and \sig{13}.
By 
\cref{lem:3d-tetrahedra:x-tetrahedra-share-y}\ref{lem:3d-tetrahedra:x-tetrahedra-share-y:three-share-edge}, 
$\{\sig{6}, \sig{12}, \sig{13}\}$ shares an edge $e'$.
If $e'$ is covered by $B'$, we set $B = B'$. If not,
we set $B = B \cup \{w\}$, where $w$ is an
endpoint of $e'$. 
Thus, by placing $k \leq 4$ beacons, we may remove $12 \geq 3k$ 
tetrahedra: all but \sig{6}.

\textbf{\cref{fig:3d-tetrahedra:upper-bound-2:case9}:} 
let $S_1 = \{\sig1, \dots, \sig6\}$, 
$S_2 = \{\sig6, \dots, \sig{11}\}$, %and 
$S_3 = \{\sig6, \sig{12}, \dots, \sig{16}\}$. 
Also, let $S_1' = S_1 \setminus \{\sig6\}$, 
$S_2' = S_2 \setminus \{\sig6\}$, and 
$S_3' = S_3 \setminus \{\sig6\}$.
We apply \cref{lem:3d-tetrahedra:central-tetrahedron-1-subtree-with-5}
to $S_1$, to $S_2$, and $S_3$.
There are several cases.
Case~(i): $S_1'$, $S_2'$, and $S_3'$ 
each share a vertex, say $v_1$, $v_2$, and $v_3$.
By the argument of \cref{fig:3d-tetrahedra:upper-bound-2:case4},
each of $v_1$, $v_2$, $v_3$ has three neighbors that
are vertices of \sig{6}.
Thus, they have one common neighbor vertex $w$ in \sig{6}.
We set $B = \{v_1, v_2, v_3, w\}$.
Case~(ii): without loss of generality, 
$S_1'$ and $S_2'$ each share a common vertex,
say $v_1$ and $v_2$, and for $S_3$
we obtain a vertex $v_3$ and an edge $e_3$
in \sig{6}, with $e_3 \cap v_3 = \emptyset$.
We set $B =  \{v_1, v_2, v_3, w\}$, where $w$ is an
endpoint of $e$.
Since $v_3$ and $w$ are in \sig{6},
it follows that $v_1$ and $v_2$ have a neighboring beacon in \sig{6}.
Case~(iii): without loss of generality,
$S_1'$ has a common vertex $v_1$ and 
$S_2$ and $S_3$ each have a vertex $v_2$ and $v_3$ as well as
an edge $e_2$ and $e_3$, all four in \sig{6}.
We place a beacon at $v_1$ and three beacons at 
vertices of \sig{6} such that $v_2$, $v_3$, 
and both edges $e_1$ and $e_2$ contain at least one beacon.
Since three neighbors of $v_1$ are in
\sig{6}, the beacon at $v_1$ has at least one beacon 
neighbor in \sig{6}.
Case~(iv): $S_1$, $S_2$, and $S_3$ 
each have a vertex and an edge in \sig{6}.
We place three beacons so that all 
of them are covered.
In all cases, we place $k \leq 4$ beacons to 
remove $15 > 3k$ tetrahedra: all but \sig{6}.
\end{proof}

We are now ready to prove
\cref{lem:3d-tetrahedra:upper-bound}:

\begin{proof}[Proof (of {\cref{lem:3d-tetrahedra:upper-bound}})]
We use induction on the size of the
tetrahedral decomposition.
The base case is in \cref{lem:3d-tetrahedra:upper-bound:base-case}.
Next,
we assume that the inductive hypothesis 
(\cref{lem:3d-tetrahedra:upper-bound}) holds for all polyhedra 
that have a tetrahedral decomposition of size less than 
$m$.
Consider a spanning tree $T$ of the dual graph 
$D(\Sigma)$ of the tetrahedral decomposition $\Sigma$,
rooted at an arbitrary leaf.
Let \sig{1} be a deepest leaf. If 
\sig{1} is not unique, choose one with the largest number of 
siblings, breaking ties arbitrarily.
We can then apply either \cref{lem:3d-tetrahedra:upper-bound:inductive-step-i} 
or \cref{lem:3d-tetrahedra:upper-bound:inductive-step-ii},
to obtain the following:
\begin{proofcases}
\item\label{prf:3d-tetrahedra:upper-bound:beacons-tetrahedra}
we have placed a set $B$ of $k \geq 1$ beacons at
vertices of $\Sigma$, and we have removed 
at least $3k$ tetrahedra;
\item\label{prf:3d-tetrahedra:upper-bound:removed-tetrahedra-covered}
every removed tetrahedron contains at least one beacon in $B$;
\item\label{prf:3d-tetrahedra:upper-bound:beacons-neighbors}
the induced subgraph on $B$ on the vertices and edges of
$\Sigma$ is connected;
\item
\label{prf:3d-tetrahedra:upper-bound:beacon-in-remaining-polyhedron}
there is a beacon $b \in B$ in the remaining polyhedron $P'$.
\end{proofcases}

By~\ref{prf:3d-tetrahedra:upper-bound:beacons-tetrahedra},
the new polyhedron $P'$ has a tetrahedral decomposition of
size $m'\leq m - 3k < m$.
Thus, by the inductive hypothesis, 
we need 
\[
k' = \floor*{\frac{m' + 1}{3}} \leq 
\floor*{\frac{m - 3k + 1}{3}} =
\floor*{\frac{m + 1}{3}} - k
\]
beacons to route between any pair of points in $P'$.
Since $k' + k = \floor*{\textfrac{m+1}{3}}$, we 
do not exceed the claimed amount of beacons.
By the inductive hypothesis 
and~\ref{prf:3d-tetrahedra:upper-bound:beacon-in-remaining-polyhedron},
it follows in particular that we can route
from any point in $P'$ to the beacon $b \in B$ and vice versa. 
From~\ref{prf:3d-tetrahedra:upper-bound:removed-tetrahedra-covered},
we know that for every point $p$ in the removed tetrahedra, 
there is a beacon $b' \in B$ such that $p$ attracts $b'$ and 
$b'$ attracts $p$.
Finally, due to~\ref{prf:3d-tetrahedra:upper-bound:beacons-neighbors},
we can route between all beacons in $B$.
In conclusion, we can route between any pair of points in $P$.
This completes the inductive step.
\qedhere
\end{proof}

\begin{observation}%
\label{obs:coverage-is-bounded-by-routing}
\cref{lem:3d-tetrahedra:upper-bound}
also implies that 
$\max\{1, \floor*{\textfrac{m+1}{3}}\}$ beacons 
are sufficient to guard a polyhedron with a tetrahedral 
decomposition of size $m$.
We need at least one beacon to cover the polyhedron, and placing them 
as in the previous proof is enough.
\end{observation}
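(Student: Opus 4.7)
The plan is to derive this observation as a short corollary of Theorem~\ref{lem:3d-tetrahedra:upper-bound}, splitting the argument by whether \(m\) triggers the maximum. The key conceptual point is that routing between arbitrary pairs of points is strictly stronger than coverage: if every point can be routed to some designated beacon, then in particular every point is covered by the first beacon in its routing sequence. So the beacon set constructed for routing automatically does double duty, provided it is nonempty.

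First I would dispatch the degenerate case \(m=1\). Here \(P\) is a single tetrahedron, which is convex, so a single beacon placed at any point of \(P\) covers all of \(P\): from any point \(p\in P\), the straight segment to the beacon lies in \(P\) and monotonically decreases the distance, so the attraction path is unobstructed. This matches \(\max(1,\floor*{(m+1)/3})=\max(1,0)=1\).

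For \(m\geq 2\), we have \(\floor*{(m+1)/3}\geq 1\), so the maximum is realized by the routing bound and we can reuse the construction from Theorem~\ref{lem:3d-tetrahedra:upper-bound}. Let \(B\) be the resulting set of beacons and pick any \(b^{*}\in B\). By Theorem~\ref{lem:3d-tetrahedra:upper-bound}, for every point \(p\in P\) there exists a sequence \(b_{1},b_{2},\ldots,b_{k}=b^{*}\) of beacons in \(B\) such that \(b_{1}\) covers \(p\) and each \(b_{i+1}\) covers \(b_{i}\). In particular \(b_{1}\in B\) covers \(p\), so \(B\) covers every point of \(P\), and \(|B|\leq\floor*{(m+1)/3}=\max(1,\floor*{(m+1)/3})\).

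There is no real obstacle here beyond confirming that the lower bound of one beacon is genuinely necessary; this is immediate because a point can only be covered by a beacon if the beacon has been placed somewhere, so \(|B|\geq 1\) whenever \(P\) is nonempty. The final statement then combines the two cases into the single bound \(\max(1,\floor*{(m+1)/3})\).
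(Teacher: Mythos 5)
Your proposal is correct and follows essentially the same route as the paper: the observation is justified there in one line by reusing the beacon set from the routing proof (coverage being implied by routability to a designated beacon) and taking the maximum with \(1\) for the degenerate convex case \(m=1\). Your only addition is to spell out why the first beacon in the routing sequence to a chosen \(b^{*}\in B\) already covers the point, which the paper leaves implicit.
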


    % !TEX root = ../paper.tex
% !TeX spellcheck = en-US

\section{A Lower Bound for Beacon\hyp{}based Routing}%
\label{sec:3d-tetrahedra:lower-bound}

Our next goal is to obtain a lower bound for the 
number of beacons needed to route in 
three-dimensional polyhedra.
We first give an alternative proof for the 
lower bound of 
$\floor*{\textfrac{n}{2}} - 1$ 
beacons for routing in two dimensions.
Our construction is similar to 
the one by Shermer~\cite{Shermer15} for 
orthogonal polygons.
We present a family of spiral\hyp{}shaped 
polygons for which we will then argue that 
$\floor*{\textfrac{n}{2}} - 1$ beacons are needed for 
routing between a specific pair of points.

\begin{definition}%
\label{def:spiral-polygon}
Given $c \in \N_{> 0}$ the \emph{$c$-corner spiral polygon}
is a simple polygon with $n = 2c + 2$ vertices
$s=r_0,r_1,\ldots,r_c,t=r_{c+1},q_c,q_{c-1},\ldots,q_1$,
in clockwise order.
The polar coordinates of the vertices are as follows: 
\begin{itemize}
    \item $r_k = \polar!{\floor!{\textfrac{k}{3}} + 1}
    {k\cdot \textfrac{2\pi}{3}}$, for $k = 0, \dots, c+1$; and
    \item 
     $q_k = \polar!{\floor!{\textfrac{k}{3}} + 1.5}
      {k \cdot \textfrac{2\pi}{3}}$, for $k = 1, \dots, c$.
\end{itemize}
The trapezoids $\trapezoid r_k q_k q_{k+1} r_{k+1}$,
for $k = 1, \dots, c - 1$ and the two triangles 
$\triangle s r_1 q_1$ and $\triangle t r_c q_c$ are 
called the \emph{hallways}.
\end{definition}

\begin{figure}[tbp]
    \centering
    \tikzmaybeexternalizenext{3d-tetrahedra--lower-bound--2d-class}
    \ifbuildtikz%
    \input{tikz/3d-tetrahedra--lower-bound--2d-class.tikz}%
    \else%
    \includegraphics{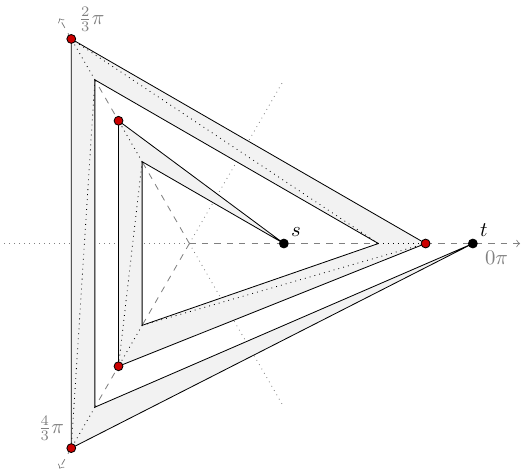}%
    \fi%

    \caption{
 A $5$-corner spiral polygon for which five beacons (marked in red)
 are necessary to route from $s$ to $t$.
    }%
    \label{fig:3d-tetrahedra:lower-bound:2d-class}
\end{figure}

An example for $c = 5$ is shown in 
\cref{fig:3d-tetrahedra:lower-bound:2d-class},
with a placement of five beacons
to route from $s$ to $t$.

\begin{figure}[tbp]
    \centering
    \subfigwidth=0.55\textwidth
    \captionsetup[subfigure]{width=\subfigwidth,justification=raggedright}
    \tikzmaybeexternalizenext{3d-tetrahedra--lower-bound--2d-class--details}
    \subfloat[Notation for the triangular spiral.]{
        \makebox[\subfigwidth][c]{%
    \ifbuildtikz%
    \input{tikz/3d-tetrahedra--lower-bound--2d-class--details.tikz}%
    \else%
    \includegraphics{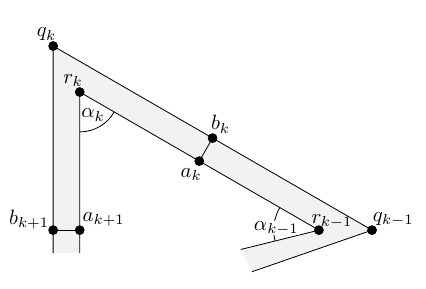}%
    \fi%
}%
        \label{fig:3d-tetrahedra:lower-bound:2d-class:details}
    }
    \hfill
    \subfigwidth=0.36\textwidth
    \captionsetup[subfigure]{width=\subfigwidth,justification=raggedright}
    \tikzmaybeexternalizenext{3d-tetrahedra--lower-bound--2d-class--corner}
    \subfloat[The complete corner $C_k$.]{
        \makebox[\subfigwidth][c]{%
    \ifbuildtikz%
    \input{tikz/3d-tetrahedra--lower-bound--2d-class--corner.tikz}%
    \else%
    \includegraphics{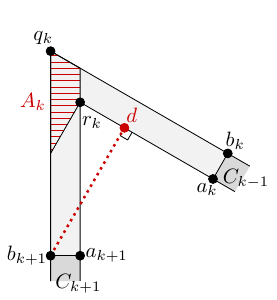}%
    \fi%
}%
        \label{fig:3d-tetrahedra:lower-bound:2d-class:corner}
    }
    \caption{
        A more detailed look at the parts of the spiral polygon.
    }
\end{figure}

\begin{lemma}[Two\hyp{}dimensional lower bound]
\label{lem:3d-tetrahedra:lower-bound-2d}
Let $c \in \N_{> 0}$ and let  $P$ be a 
$c$-corner spiral polygon. Let $B \subset P$ be a set
of beacons that lets us route from $s$ to $t$.
Then, we have $|B| \geq c$.
\end{lemma}

\begin{proof}
We shoot three rays from the origin with angles 
$\textfrac{\pi}{3}\pi$, $\pi$, and $\textfrac{5\pi}{3}$; 
see \cref{fig:3d-tetrahedra:lower-bound:2d-class}.
Each edge of $P$ is intersected by exactly one ray. 
For $k = 1, \dots, c + 1$, 
the intersection of a ray with the edge $r_{k-1}r_k$ 
is called $a_k$ 
and the intersection with the edge $q_{k-1}q_k$
is called $b_k$.
We divide $P$ into $c + 2$ subpolygons 
$C_0, \dots, C_{c+1}$ by drawing the line
segments $a_kb_k$, for $k = 1, \dots, c + 1$.
This gives two triangles $C_0$ and $C_{c + 1}$, 
with $s$ and $t$, respectively, and $c$ 
subpolygons $C_1, \dots, C_c$, called the
\emph{complete corners} of $P$; see 
\cref{fig:3d-tetrahedra:lower-bound:2d-class:details}.
We show that for $k = 1, \dots, c$, 
there must be at least one beacon from $B$
in $C_k \setminus (a_kb_k \cup a_{k+1}b_{k+1})$.

Suppose we route a point-shaped object $p$ from
$s$ to $t$ with the help of $B$.
Fix a complete corner 
$C_k$, $1 \leq k \leq c$, 
as in \cref{fig:3d-tetrahedra:lower-bound:2d-class:corner}.
Consider the last time the object $p$ crosses 
$a_kb_k$.
At this point, $p$ is attracted by
a beacon $b \in B$, and as we
require that $p$ moves all the way to $b$,
the beacon $b$ must lie in a complete corner
$C_\ell$, with $\ell \geq k$ (and $b$ is not on the
line segment $a_kb_k$).
In fact, $b$ can only be in 
$C_k$ or in $C_{k + 1}$, since otherwise it is clearly 
not possible that $p$ reaches $b$ along an 
attraction path.
Thus, for $p$ to reach $b$, it must be the case that 
either $a_kb_k$ is directly visible 
from $b$, or that the closest point to $b$ 
on $r_ka_k$ is $r_k$.
Otherwise, $p$ would get stuck on $r_ka_k$, 
see \cref{fig:3d-tetrahedra:lower-bound:2d-class:corner}.
The hatched region $A_k$ in
\cref{fig:3d-tetrahedra:lower-bound:2d-class:corner} shows 
the possible positions of $b$ under these constraints.
If this region is disjoint from $a_kb_k \cup a_{k+1}b_{k+1}$ the
claim follows immediately.

\begin{figure}[tbp]
    \centering
    \tikzmaybeexternalizenext{3d-tetrahedra--lower-bound--2d-class-hatched}
    \ifbuildtikz%
    \input{tikz/3d-tetrahedra--lower-bound--2d-class-hatched.tikz}%
    \else%
    \includegraphics{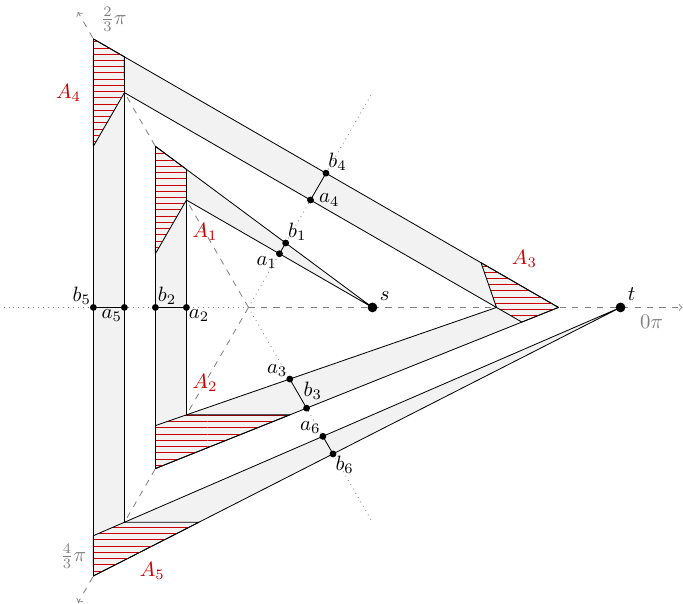}%
    \fi%

    \caption{
        A $5$-corner spiral polygon which shows the possible locations
        of the needed beacons to route through each corner when
        routing from $s$ to $t$.
    }%
    \label{fig:3d-tetrahedra:lower-bound:2d-class-hatched}
\end{figure}

In \cref{fig:3d-tetrahedra:lower-bound:2d-class-hatched} we can see all
$A_k$ for $1\leq k\leq c+1$ for $c=5$.
Clearly none of the $A_k$ intersect $a_kb_k$.
We show that none of the $A_k$ intersect $a_{k+1}b_{k+1}$ for each of the three directions:
\begin{enumerate}
\item
    $k=1,4,7,\ldots$: The $A_k$ are congruent since the angle $\alpha_k$ is always exactly $\pi/3$.
    Hence, as can be observed in \cref{fig:3d-tetrahedra:lower-bound:2d-class-hatched}, for increasing $k$ the distance from $A_k$ to $a_{k+1}b_{k+1}$ increases.
    Since $A_1$ does not intersect $a_2b_2$ the same holds true for all $k=1,4,7,\ldots$.
\item
    $k=2,5,8,\ldots$: The boundary edge of $A_k$ which could intersect $a_{k+1}b_{k+1}$ is always horizontal.
    As long as $b_{k+1}$ lies above this boundary edge no intersection is possible.
    This is the case for $A_2$ (as visible in \cref{fig:3d-tetrahedra:lower-bound:2d-class-hatched}).
    Since the length of the hallways increases and the angle $\alpha_k$ decreases for increasing $k$ it is always the case that $b_{k+1}$ lies above the horizontal bounding edge of $A_k$.
    Hence, none of the $A_k$ intersect $a_{k+1}b_{k+1}$ for $k=2,5,8,\ldots$.
\item
    $k=3,6,9,\ldots$: $A_3$ clearly does not intersect $a_4b_4$.
    However, as $k$ grows, the angle $\alpha_k$ increases towards $\pi/3$ and the $A_k$ grow towards a shape that is congruent with $A_1$.
    Since the hallways become larger and larger, even putting a rotated copy of $A_1$ at $A_3$ would not give an intersection with $a_4b_4$.
\end{enumerate}
It follows that $|B| \geq c$.
\end{proof}

We now extend this proof to three dimensions. For this,
we first define a $c$-corner spiral \emph{polyhedron}.

\begin{definition}%
\label{def:spiral-polyhedron}
Given $c \in \N_{>0}$ the \emph{$c$-corner spiral
polyhedron} is a 
polyhedron with $n = 3c + 2$ vertices $s=r_0$,
$r_1, \dots, r_c$, $t=r_{c+1}$, $q_1, \dots, q_c$,
and $z_1, \dots, z_c$.
The coordinates of $s$, $t$, $q_k$, and $r_k$, for 
$k = 1, \dots, c$,
are the same as in \cref{def:spiral-polygon}, 
with the $z$-coordinate set to $0$.
The $z_k$ are positioned above the 
corresponding $r_k$, i.e.,
$z_k = r_k + \brk*{\!\begin{smallmatrix}0\\0\\1\end{smallmatrix}\!}$,
for $k = 1, \dots, c$.
The edges and facets are given by the 
following tetrahedral decomposition:
\begin{itemize}
\item
The start and end tetrahedra are 
$\tetrahedron r_1 q_1 z_1 s$ and $\tetrahedron r_c q_c z_c t$.
\item
The \emph{hallway} between two triangles 
$\triangle r_k q_k z_k$ and  
$\triangle r_{k + 1} q_{k + 1} z_{k + 1}$ 
consists of the three tetrahedra 
$\tetrahedron r_k q_k z_k r_{k+1}$, 
$\tetrahedron r_{k+1} q_{k+1} z_{k+1} q_k$, 
and $\tetrahedron q_k z_k r_{k+1} z_{k+1}$, 
for $k = 1, \dots, c - 1$.
\end{itemize}
\end{definition}

For $c = 1$, the
$c$-corner spiral polyhedron has
two tetrahedra. For $c > 1$, we 
add $c - 1$ hallways, each 
with three tetrahedra. This 
means that a $c$-corner spiral polyhedron 
has a tetrahedral decomposition of size $m = 3c - 1$.
Thus, by \cref{def:spiral-polyhedron}, the number of tetrahedra 
in terms of the number of vertices is 
$m = 3 \cdot \textfrac{n - 2}{3} - 1 = n -3$, the smallest number
possible for a given $n$.

\begin{lemma}[Lower bound]%
\label{lem:3d-tetrahedra:lower-bound}
Let $c \in \N_{> 0}$ and let  $P$ be a 
$c$-corner spiral polyhedron. Let $B$ be a set
of beacons that lets us route from $s$ to $t$.
Then, $|B| \geq c$.
\end{lemma}
\begin{proof}
We show that a projection $B'$ of $B$ onto the $xy$-plane maintains the attraction regions.
It then follows from \cref{lem:3d-tetrahedra:lower-bound-2d} that $|B|=|B'|\geq c$.

Note that the only reflex edges in $P$ are the edges $e_k=r_kz_k$ for all $k=1,\ldots,c$.
Look at a beacon $b\in B$ and its projection $b'\in B'$.
If a point $p$ is visible from $b$ it must be visible from $b'$ as well:
Since the hallways are convex objects and the only edges that could prevent visibility are the vertical reflex edges $r_kz_k$ a vertical translation of $b$ to $b'$ cannot inhibit visibility.

If a point $p$ is attracted by $b$ (but not visible from $b$) it must be attracted by $b'$ as well.
Each such attraction goes through exactly one reflex edge: at least one since $p$ is not visible and at most one since two reflex edges in $P$ together form angles larger than $\pi$.
The movement of $p$ is a movement (possibly of length zero) until it hits a face $f_k=r_kz_kr_{k+1}z_{k+1}$ at point $q$.
It then slides along $f_k$ until it hits one of the boundary edges w.l.o.g. $e_k=r_kz_k$ at point $u$.
It then moves directly towards $b$.

Since $f_k$ is orthogonal to the $xy$-plane if $p$ is attracted by $b'$ it will hit $f_k$ at a point $q'$ which can be obtained by moving $q$ down along the $z$-axis.
Hence the point then slides from $q'$ along $f_k$ towards $e_k$ where it reaches at a point $u'$ which (again due to $e_k$ being orthogonal to the $xy$-plane) can be obtained by moving $u$ down along the $z$-axis.
It then moves directly towards $b'$.

Thus the set $B$ can only attract what $B'$ can.
Since $B'$ lies in the $xy$-plane and a cross section of $P$ along the $xy$-plane gives exactly a $c$-corner spiral \emph{polygon} $P'$.
By \cref{lem:3d-tetrahedra:lower-bound-2d} we obtain then that $|B| = |B'| \geq c$, as claimed.
\end{proof}

    % !TEX root = ../paper.tex
% !TeX spellcheck = en-US

\section{A Tight Bound for Beacon\hyp{}based Routing}%
\label{sec:3d-tetrahedra:sharp-bound}

We combine the results from 
\cref{sec:3d-tetrahedra:upper-bound} and 
\cref{sec:3d-tetrahedra:lower-bound} into a tight bound:

\begin{theorem}\label{thm:3d-tetrahedra:sharp-bound}
Let $P$ be a three-dimensional polyhedron, and 
$m$ the smallest size of a 
tetrahedral decomposition of $P$.
Then, it is always sufficient and sometimes necessary to 
place $\floor*{\textfrac{m+1}{3}}$ beacons 
to route between any pair of points in $P$.
\end{theorem}

\begin{proof}
The upper bound was shown in  \cref{lem:3d-tetrahedra:upper-bound}.
For the lower bound, we consider the $c$-corner 
spiral polyhedron $P_c$ with 
$c = \floor*{\textfrac{m + 1}{3}}$. 
By \cref{def:spiral-polyhedron},
$P_c$ has a smallest tetrahedral decomposition of size 
$m' = 3c - 1$. 
Furthermore, by 
\cref{lem:3d-tetrahedra:lower-bound}, we need at least $c$ 
beacons to route in $P_c$. 
This also shows that $P_c$ does not have a
tetrahedral decomposition with size strictly less than $m'$,
since otherwise
\cref{lem:3d-tetrahedra:upper-bound} would 
yield a contradiction.

Due to the rounding we might have
$m' = m - 1$ or $m - 2$.
We then look at the $(c+1)$-corner spiral $P_{c+1}$ that consists
of three tetrahedra more than $P_c$.
More specifically, the last hallway of $P_{c+1}$ consists of the
three tetrahedra
$\sig{1}=\tetrahedron r_c q_c z_c r_{c+1}$,
$\sig{2}=\tetrahedron q_c z_c r_{c+1} z_{c+1}$, and
$\sig{3}=\tetrahedron q_c r_{c+1} q_{c+1} z_{c+1}$.
The tetrahedron \sig{1} is already present in $P_c$. Hence, 
for $m'=m-1$, we add
\sig{2}, and for $m'=m-2$, we add \sig{2} and \sig{3} to $P_c$.
Since for each additional tetrahedron we also need to add one
additional vertex ($z_{c+1}$ for \sig{2} and $q_{c+1}$ for \sig{3}),
there is no decomposition of the resulting polyhedron into less than
$m$ tetrahedra.

Additionally, the resulting polyhedron also
needs at least $c$ beacons because the added tetrahedra cannot lower
the number of beacons needed.
\end{proof}

    % !TEX root = ../paper.tex
% !TeX spellcheck = en-US

\section{Conclusion}%
\label{sec:conclusion}

We have shown that, given a tetrahedral 
decomposition of a polyhedron $P$ 
of size $m$, we can place 
$\floor*{\textfrac{m + 1}{3}}$ beacons to route between 
any pair of points in $P$.
We also constructed a family of polyhedra 
where this is also necessary.

A lot of questions that have been studied 
in two dimensions remain open for the three-dimensional case.
For example, the complexity of 
finding an optimal beacon set to route between a given pair of 
points remains open.
Additional open questions concern the efficient computation of  attraction 
regions (computing the set of all points attracted 
by a single beacon) and of beacons kernels 
(all points at which a beacon can attract all points in the polyhedron).

Furthermore, Cleve~\cite{Cleve17} showed that not
all polyhedra can be covered by vertex beacons
and 
Aldana-Galv\'an~\etal~\cite{AldanaGalvanAlCaNeSoUrVe17a,AldanaGalvanAlCaNeSoUrVe17b} 
showed that this is even true for orthogonal polyhedra.
Given a polyhedron $P$ with a tetrahedral decomposition of 
size $m$, 
it remains open whether it is possible to guard 
$P$ with fewer than 
$\max\{1,\floor*{\textfrac{m + 1}{3}}\}$ beacons 
as in Observation~\ref{obs:coverage-is-bounded-by-routing}.

    \bibliography{bibliography}

    \appendix
    % !TEX root = ../paper.tex
% !TeX spellcheck = en-US

\section{Program Code to Generate Trees}%
\label{appendix:trees-program}

\lstinputlisting{trees.py}

\end{document}